\documentclass[aps,prx,reprint,superscriptaddress,amsmath,amssymb,longbibliography,10pt]{revtex4-2}

\usepackage{amsthm}
\usepackage{microtype}
\usepackage{booktabs}
\usepackage{tabularx}
\usepackage{array}
\usepackage{enumitem}
\usepackage[colorlinks=true,linkcolor=black,urlcolor=blue,citecolor=black]{hyperref}
\usepackage{orcidlink}
\usepackage{xspace}

\newtheorem{theorem}{Theorem}
\newtheorem{lemma}[theorem]{Lemma}

\newtheorem*{theoremEprime}{Theorem E$'$}
\newtheorem*{hypU}{Hypothesis (U)}

\newcommand{\Z}{\mathbb{Z}}
\newcommand{\E}{\mathbb{E}}
\newcommand{\Var}{\mathrm{Var}}
\newcommand{\Cov}{\mathrm{Cov}}
\newcommand{\ind}{\mathbf{1}}
\newcommand{\topp}{\mathrm{top}}
\newcommand{\msb}{\mathrm{msb}}
\newcommand{\Sim}{\cite{Sim26}\xspace}

\newcolumntype{L}{>{\raggedright\arraybackslash}X}

\begin{document}

\title{\texorpdfstring{Rigorous Statements and Proofs of the Lemmas in Simon's Algorithm\\ for the Dihedral Coset Problem and Their Underlying Hypothesis}{Rigorous Statements and Proofs of the Lemmas in Simon's Algorithm for the Dihedral Coset Problem and Their Underlying Hypothesis}}

\author{Yuchen Guo~\orcidlink{0000-0002-4901-2737}} \email{guo-yc23@mails.tsinghua.edu.cn} \affiliation{State Key Laboratory of Low Dimensional Quantum Physics and Department of Physics, Tsinghua University, Beijing 100084, China}

\author{Shuo Yang~\orcidlink{0000-0001-9733-8566}} \email{shuoyang@tsinghua.edu.cn} \affiliation{State Key Laboratory of Low Dimensional Quantum Physics and Department of Physics, Tsinghua University, Beijing 100084, China} \affiliation{Frontier Science Center for Quantum Information, Beijing 100084, China} \affiliation{Hefei National Laboratory, Hefei 230088, China}

\date{\today}

\begin{abstract}
In a recent preprint, Simon proposed a polynomial-time quantum algorithm for the Dihedral Coset Problem and rested the analysis on four lemmas.
Three of them carry only proof sketches, and this paper gives each of those three a statement that admits a single reading together with a complete proof.
Lemma 1 follows from an exact second-moment computation for the subset-sum counts, and it holds with probability tending to one in place of the constant originally claimed.
The amplitude bound of Lemma 3 follows from an exact Parseval identity on the cube of measurement outcomes and holds at every threshold with no \emph{well-behavedness} hypothesis, so that predicate leaves the argument entirely.
For Lemma 4, we compute both balls-in-bins covariances exactly and find that the second carries a term a fixed ball count leaves out.
The assumption that the distinguished group contains no faulty samples can also be dropped.
The two branch amplitudes share a signed prefactor, so the counting estimates control their difference and not the ratio the lemma states.
We prove the additive form and show that the closing argument consumes nothing more than that.
A single hypothesis survives all of this.
It asks that the partition into the two sides be fixed independently of the measured string, and the rule the algorithm gives for choosing that partition does not supply it.
Establishing these four lemmas therefore does not by itself establish the correctness of the algorithm.
\end{abstract}

\maketitle

\section{Introduction}

The Dihedral Coset Problem asks for a hidden shift.
One is given a supply of two-term superpositions $|0,\xi\rangle+|1,\xi+d\rangle$ with $\xi$ uniform on $\Z_N$ and $d$ fixed, and the task is to recover $d$.
It is the hidden subgroup problem for the dihedral group, and the standard Fourier method that solves the abelian cases in Shor's algorithm does not carry over \cite{Sho97,CvD10}.
Ettinger and H\o yer showed that polynomially many samples already determine $d$ information-theoretically, so the whole difficulty lies in reading $d$ out in polynomial time \cite{EH00}.
The best known algorithms are the subexponential sieves of Kuperberg \cite{Kup05,Kup13} and Regev \cite{Reg04b}.
Bacon, Childs and van Dam identified the optimal measurement for the problem and reduced its implementation to a classical subset-sum question \cite{BCD05}.

Much of the interest in the problem comes from lattices.
Ajtai's worst-case to average-case reduction placed lattice problems at the foundation of a family of cryptosystems \cite{Ajt96}.
Regev showed that a polynomial-time solver would give a polynomial-time quantum algorithm for the unique shortest vector problem \cite{Reg04}, and Brakerski, Kirshanova, Stehl\'e and Wen extended the reduction to Learning with Errors at polynomial approximation factors \cite{Reg09,BKSW18}.
The complexity of the Dihedral Coset Problem therefore sits directly beneath a large part of lattice-based cryptography, including the key-encapsulation scheme now being standardised \cite{NIST24}.
A claim of a polynomial-time quantum algorithm in this area invites close reading, and the recent lattice algorithm of Chen received exactly that before a step in it was found not to work \cite{Che24}.

Simon has recently posted a preliminary draft giving a polynomial-time quantum algorithm for the problem \Sim.
It works on $Q$ samples at once and sorts the $Q$ bits into groups of $m$.
The groups whose measurement outcome is all zero then build an unsigned count that both branches of the final interference share.
The analysis rests on four lemmas.
Lemma~2 is exact and carries a complete two-line proof, while the other three are marked \emph{Proof (Sketch)}.
This paper supplies statements for those three that admit a single reading, together with complete proofs.

\emph{Lemma 1} asserts that enough all-zero groups are collected.
Section~\ref{sec:l1} computes the first and second moments of their number exactly by means of two Fourier transforms, one on the cube of measurement outcomes and one on the group $\Z_N$ of subset sums.
The correlation between two groups turns out to be exponentially small, and Chebyshev then gives probability tending to one in place of the constant \Sim\ asks for.

\emph{Lemma 3} bounds the individual amplitudes.
Section~\ref{sec:l3} obtains the bound from an exact Parseval identity, which needs no independence property of the signs and no \emph{well-behavedness} property either.
The bound holds at every threshold, and the probabilities are taken directly on the distribution of records.
These proofs ask for the $A/B$ partition to be fixed independently of the measured string, which we state as Hypothesis~(U) in Section~\ref{sec:l3}.

\emph{Lemma 4} compares the two branch amplitudes, and its proof works on the counts that feed them.
The skeleton is a pair of balls-in-bins estimates, and Section~\ref{sec:l4} computes the covariance of each one exactly.
The first comes out with the exponent and the failure probability \Sim\ displays, while the second works with a random number of surviving states that the algorithm imposes.
We also remove the assumption that the distinguished group $g_a$ contains no faulty samples.
Carrying the counts back up to the amplitudes raises a question of form.
The two amplitudes of a pair share one signed prefactor, so the counting estimates deliver a bound on their difference while the lemma asks for their ratio.
That turns out to cost nothing, because the closing argument only ever adds up differences.
Section~\ref{sub:l4assemble} therefore proves the weaker statement and reaches the same exponent from it.

Table~\ref{tab:status} summarizes where each is treated.
The rule \Sim\ gives for choosing $A$ does not supply Hypothesis~(U), so establishing these lemmas does not by itself establish the correctness of the algorithm.
Results proved here are lettered, so a numbered lemma is always one of \Sim.

\begin{table}[htbp]

\renewcommand{\arraystretch}{1.15}
\centering
\caption{The four lemmas of \Sim\ and where each is treated below.}
\label{tab:status}
\begin{tabularx}{\linewidth}{@{}lLl@{}}
\toprule
In \Sim & What it asserts & Here \\
\midrule
Lemma~1 & Enough all-zero groups are collected
  & \ref{thm:C} \\
Lemma~2 & The $B$ side contributes a common factor
  & \S\ref{sec:setup} \\
Lemma~3 & Individual amplitudes are bounded & \ref{thm:E},
  \hyperlink{thm:Ep}{E$'$} \\
Lemma~4 & The two branch amplitudes are nearly equal & \ref{thm:G}, \ref{lem:H} \\
\bottomrule
\end{tabularx}
\end{table}

\section{Setup}\label{sec:setup}

\subsection{Problem setup}\label{sub:problem}

Write $N = 2^n$ and $\omega = e^{2\pi i/N}$, and let the unknown value be $d = d_1d_2\cdots d_n$ with least significant bit $d_n$.
A sample is the state
\begin{equation}
  \tfrac1{\sqrt2}\bigl(|0,\xi\rangle + |1,\xi+d \bmod N\rangle\bigr),
\end{equation}
with $\xi$ uniform and unknown and $d$ fixed.
With probability $1/(c'\log n)$ for a constant $c'$, a sample is faulty and is an $(n+1)$-bit classical string instead.
Faulty samples play no role before Section~\ref{sec:l4}.

Applying the $\Z_N$ Fourier transform to the second register and measuring it returns a value $y$ that is uniform on $\Z_N$ and known to us, and it collapses the first register to $|0\rangle + \omega^{yd}|1\rangle$.
Taking $Q$ samples and writing $Y = (y_1,\dots,y_Q)$ for the measured weights, the tensor product of the $Q$ qubits expands into
\begin{align}
  |\Psi\rangle &\propto \sum_{\phi\in\{0,1\}^Q}\omega^{d\,z(\phi)}\,|\phi\rangle , \\
  z(\phi) &= \sum_{i=1}^{Q}\phi_iy_i \bmod N .
\end{align}
Every basis state carries the same modulus, so all of the information sits in the phase, and that phase depends on $\phi$ only through the weighted sum $z(\phi)$.

Computing $z(\phi)$ into an ancilla and measuring its low $n-1$ bits returns a value $z'$ and leaves the top bit $h = \msb(z)$ in superposition, where $\msb$ denotes the most significant bit.
Since $\omega^{d\cdot2^{n-1}} = (-1)^{d_n}$, the surviving state is
\begin{align}
  |\Psi\rangle &\propto \sum_{h\in\{0,1\}}(-1)^{h\,d_n} \Bigl(\sum_{\phi\in Z'_h}|\phi\rangle\Bigr)|h\rangle , \\
  Z'_h &= \bigl\{\phi : z(\phi) = z' + 2^{n-1}h \bigr\} .
\end{align}
Here $h(\phi)$ is a function of $\phi$ and not a free register, and we write $Z' = Z'_0\sqcup Z'_1$ for the set of all $\phi$ compatible with the measured $z'$.
The bit $d_n$ now sits in the relative phase of $|h\rangle$.
If the register $|h\rangle$ stood alone, one Hadamard and one measurement would return $d_n$ with certainty, and a recursion on the remaining bits of $d$ would finish the problem.
However, the two branches are attached to disjoint sets of $\phi$, so they occupy orthogonal subspaces and cannot interfere.
Regev's way out is to erase the $\phi$ register with a subset-sum oracle \cite{Reg04}.
Simon's way out is to change basis, and the seven steps below carry that out.

\subsection{The algorithm in seven steps}

Throughout this paper the \emph{ket} holds the registers still in superposition and the \emph{record} holds the classical numbers already measured.
The record is $(Y, z', \phi', S, W', h')$, and any of its entries may be substituted into a formula.
Table~\ref{tab:steps} tracks both.

\begin{table}[htbp]

\renewcommand{\arraystretch}{1.15}
\centering
\caption{What survives each step. Erasures are reversible and produce no record.}
\label{tab:steps}
\begin{tabularx}{\linewidth}{@{}llL@{}}
\toprule
After & In the ket & New classical data \\
\midrule
Step 1 & $\phi$ & $Y=(y_1,\dots,y_Q)$ \\
Step 2 & $\phi,\,h$ & $z'$, the low $n-1$ bits of $z$ \\
Step 3 & $\phi,\,h,\,s_1\cdots s_G$ & none \\
Step 4 & $h,\,s_1\cdots s_G$ & $\phi'$, and $\phi$ is gone \\
Step 5 & $h,\,s_1\cdots s_a$ & $S=(\sigma'_j)_{j\in B}$ \\
Step 6 & $h,\,h^*$ & $W'=(s_1,\dots,s_{a-1},l_{s^*})$ \\
Step 7 & $h^*$ & $h'=h\oplus h^*$, and $h$ is erased \\
Final & none & the Hadamard measurement of $h^*$ \\
\bottomrule
\end{tabularx}
\end{table}

Steps~1 and 2 are the constructions of Section~\ref{sub:problem}, which produce the record entries $Y$ and $z'$ and leave the ket in the state displayed there.

\emph{Step 3, the group sums.} Partition $\{1,\dots,Q\}$ into $G = Q/m$ groups $g_1,\dots,g_G$ of $m = c\log n$ bits each, and write
\begin{equation}
  r_j(\phi) = \sum_{i\in g_j}\phi_iy_i \bmod N ,
  \qquad s_j(\phi) = \topp\bigl(r_j(\phi)\bigr)
\end{equation}
for the sum within group $j$ and its top $L = \log n$ bits.
Each $s_j$ is computed reversibly into its own $L$-bit register, giving
\begin{equation}
  |\Psi_3\rangle \propto \sum_{\phi\in Z'}(-1)^{h(\phi)d_n}\,
     |\phi\rangle\,|h(\phi)\rangle\,|s_1(\phi)\cdots s_G(\phi)\rangle .
\end{equation}
These registers survive the disappearance of $\phi$ at the next step, which keeps only the top $L$ bits of each $r_j$.

\emph{Step 4, the change of basis.} Apply a Hadamard to all $Q$ bits and measure, obtaining a classical string $\phi' = \phi'_1\cdots\phi'_G$, written in blocks by group.
The $\phi$ register is gone, and the amplitude of a surviving label $(h,\vec s\,)$ becomes a \emph{signed count} of the $\phi$ compatible with that label, each one weighted by the sign $(-1)^{\phi\cdot\phi'}$,
\begin{align}
  |\Psi_4\rangle &\propto \sum_{h,\vec s}\mathcal E[h,\vec s\,]\;|h,\vec s\,\rangle , \\
  \mathcal E[h,\vec s\,] &= (-1)^{hd_n}\!\!\!\!
     \sum_{\substack{\phi\in Z'\,:\ h(\phi)=h\\ s_j(\phi)=s_j\ \forall j}}\!\!\!\!
     (-1)^{\phi\cdot\phi'} .
\end{align}

\emph{Step 5, the $A/B$ split.} A group whose measured block $\phi'_j$ is the all-zero string contributes no sign, since $(-1)^{\phi_j\cdot0} = 1$.
Let $A$ consist of the first $a = n/\log n$ such groups and let $B$ hold the rest.
Lemma~1 of \Sim asserts that there are at least $a$ all-zero groups, and Section~\ref{sec:l1} treats that claim.
The $A$ side holds $am = cn$ sample bits.
Write $\phi = (\phi_A,\phi_B)$ and $\phi' = (\phi'_A,\phi'_B)$ for the strings cut into their $A$ and $B$ parts, so that $\phi'_A = 0$ by construction and
\begin{equation}
  (-1)^{\phi\cdot\phi'} \;=\; (-1)^{\phi_B\cdot\phi'_B} .
\end{equation}
Measure $s_j$ for every $j\in B$, obtaining $S = (\sigma'_j)_{j\in B}$, and the ket retains $(h,s_1\cdots s_a)$.

\emph{Step 6, manufacturing $h^*$.} Here we introduce a new register that represents the top bit of the $A$-side partial sum.
Write
\begin{equation}
  z^*(\phi) = \sum_{i\in A}\phi_iy_i \bmod N ,
  \qquad h^* = \msb(z^*)
\end{equation}
for the $A$-side partial sum and its top bit.
This step turns $h^*$ into an explicit register in five moves.

\begin{enumerate}[leftmargin=1.6em,itemsep=0.2em,topsep=0.3em]
\item Compute $s^* = \sum_{j\in A}s_j \bmod 2^L$ into a fresh $L$-bit register.
\item Erase $s_a$, which is now redundant, being
$s^* - \sum_{j<a}s_j \bmod 2^L$.
\item Compute a check bit $l_{s^*}$ that equals $1$ exactly when positions $2$ through $\log L$ of $s^*$ are all $1$, and post-select on $l_{s^*} = 0$.
This prohibits the carry from the bottom bits and makes $\msb(s^*) = \msb(z^*) = h^*$.
The post-selection succeeds with probability about $1-2/\log n$.
\item Split $s^*$ into its top bit $h^*$ and its low $L-1$ bits $\bar s^*$,
Hadamard $\bar s^*$, measure, and post-select on the all-zero outcome.
This succeeds with probability about $2/n$ and keeps all $n/2$ values of $\bar s^*$ in the sum with coefficient $+1$.
The bit $h^*$ itself is never measured, since we read it at the end.
\item Measure $W = (s_1,\dots,s_{a-1},l_{s^*})$, obtaining $W'$.
\end{enumerate}

The ket now holds $(h,h^*)$.

\emph{Step 7, transferring the phase.} Write $\tilde z^*$ for the low $n-1$ bits of $z^*$ and $z_B = \sum_{i\in B}\phi_iy_i$.
From $z^*+z_B \equiv z'+2^{n-1}h$ and $-2^{n-1}\equiv2^{n-1}\pmod{2^n}$,
\begin{equation}\label{eq:l2}
  z_B \;\equiv\; (z'-\tilde z^*) + 2^{n-1}h', \qquad h' \;=\; h\oplus h^* .
\end{equation}
The two top bits reach the $B$-side constraint only through $h'$.
Compute $h'$ with one CNOT, measure it, and use $h'$ together with $h^*$ to erase $h$ reversibly.
Once $h'$ is a measured constant the $B$ side no longer sees $h^*$, and the two branches share one $B$-side factor.
This is Lemma~2 of \Sim, and it is the only one of the four with a complete published proof.
The phase survives the substitution, since
\begin{equation}
  (-1)^{hd_n} \;=\; (-1)^{h'd_n}\,(-1)^{h^*d_n}
\end{equation}
and the first factor is a global constant once $h'$ is known.
The surviving state is
\begin{equation}
  |\Psi_7\rangle \;\propto\; \sum_{h^*\in\{0,1\}}(-1)^{h^*d_n}\,E_{h^*}\,|h^*\rangle ,
\end{equation}
where $E_{h^*}$ is the signed count of the $\phi$ compatible with the whole record and with that value of $h^*$.

\emph{The final step.} Hadamard $h^*$ and measure.
For a state $E_0|0\rangle + (-1)^{d_n}E_1|1\rangle$ the outcome equals $d_n$ with probability
\begin{equation}\label{eq:p}
  p \;=\; \frac{(E_0+E_1)^2}{2(E_0^2+E_1^2)} \;=\; \frac1{1+q^2},
  \qquad q \;=\; \frac{|E_0-E_1|}{|E_0+E_1|} ,
\end{equation}
so the whole algorithm reduces to showing that $E_0$ and $E_1$ are close.
Once $p$ exceeds $1/2$ by a non-negligible amount, polynomially many repetitions and a majority vote return $d_n$, and a recursion on the remaining bits of $d$ finishes the problem.

\subsection{The coefficient reduces to convolution}
\begin{table*}
\renewcommand{\arraystretch}{1.15}
\centering
\caption{The set families. Within each side, every entry carries the constraints
of the entry above it together with the one shown.}
\label{tab:sets}
\begin{tabularx}{0.86\linewidth}{@{}l c L@{}}
\toprule
Set & Fixed after & Constraint added \\
\midrule
\addlinespace[-0.3ex]
\multicolumn{3}{@{}l}{\itshape The $A$ side, where no sign is carried}\\
\addlinespace[0.3ex]
$\mathcal A_{z^*}$ & Step 4 & $z^*(\phi_A) = z^*$ \\
$\mathcal A'_{z^*,W'}$ & Step 6 & $s_j(\phi_A)=W'_j$ for $j<a$, together with
   $l_{s^*(\phi_A)}=0$ \\
\addlinespace[0.9ex]
\multicolumn{3}{@{}l}{\itshape The $B$ side, which carries every sign}\\
\addlinespace[0.3ex]
$\mathcal B_{(z',z^*),h}$ & Step 4 & $z_B(\phi_B) = z'+2^{n-1}h-z^*$ \\
$\mathcal B_{(z',z^*),S,h}$ & Step 5 & $s_j(\phi_B)=\sigma'_j$ for $j\in B$ \\
$\mathcal B'_{(z',\tilde z^*),S,h'}$ & Step 7 & $h$ replaced by the measured
   $h'$, so that by Eq.~\eqref{eq:l2} the index $z^*$ collapses to $\tilde z^*$ \\
\bottomrule
\end{tabularx}
\end{table*}
Two families of sets record how the constraints tighten as the measurements accumulate, one family on each side of the partition.
The two base members are fixed by the data known just after Step~4, namely the measured $z'$ together with a value of the index $z^*$ that ranges over $\Z_N$ as a summation variable and is never a measurement outcome,
\begin{align}
  \mathcal A_{z^*} &= \Bigl\{\phi_A : \sum_{i\in A}\phi_iy_i = z^*\Bigr\} , \\
  \mathcal B_{(z',z^*),h} &= \Bigl\{\phi_B : z^* + \sum_{i\in B}\phi_iy_i = z' + 2^{n-1}h\Bigr\} .
\end{align}
Each later measurement discards some of the strings that remain, and the notation records this by growing a subscript.
A prime marks the member that carries everything the record fixes.
On the $A$ side that member is $\mathcal A'_{z^*,W'}$, which collects the $\phi_A$ with $z^*(\phi_A) = z^*$ that also match the top blocks recorded in $W'$ at Step~6 and satisfy the post-selection $l_{s^*}=0$.
On the $B$ side it is $\mathcal B'_{(z',z^*),S,h'}$, which collects the $\phi_B$ that match the blocks $S$ measured at Step~5 and obey Eq.~\eqref{eq:l2} with the measured $h'$.
Table~\ref{tab:sets} collects all five members with the constraint each one adds.

The $A$-side member is indexed by the whole of $z^*$, while the $B$-side member is indexed by its low $n-1$ bits $\tilde z^*$ alone.
That collapse $\mathcal B'_{(z',z^*),S,h'}\rightarrow \mathcal B'_{(z',\tilde z^*),S,h'}$ is Lemma~2.
By Eq.~\eqref{eq:l2} the $B$ side is constrained by $(z'-\tilde z^*)$ and $h'$, so two values of $z^*$ differing only in the top bit $h^*$ give the same set of $\phi_B$, and therefore the same signed count.

The $B$-side signed count and the normalised per-index amplitude are
\begin{align}
  C_{(z',\tilde z^*),S,h'} &= \sum_{\phi_B\in\mathcal B'_{(z',\tilde z^*),S,h'}} (-1)^{\phi_B\cdot\phi'_B} , \\
  \alpha_{z^*} &= \lambda_4\,\bigl|\mathcal A'_{z^*,W'}\bigr|\, C_{(z',\tilde z^*),S,h'} ,
\end{align}
where $\lambda_4 = (E_0^2+E_1^2)^{-1/2}$ normalises the two branches at the end of Step~7.
The unnormalised amplitude of the branch $h^*$ is
\begin{equation}\label{eq:master}
  E_{h^*} \;=\; \sum_{\tilde z^*\in\{0,1\}^{n-1}}
     \bigl|\mathcal A'_{(h^*,\tilde z^*),W'}\bigr|\;
     C_{(z',\tilde z^*),S,h'} .
\end{equation}
The $A$-side factor is an unsigned count and every sign in the problem has been packed into $C$.
We write $\hat E_{h^*} = \lambda_4E_{h^*}$ for the normalised branch amplitude, so that $\hat E_0^2+\hat E_1^2 = 1$ and $\sum_{z^*\,:\,\msb(z^*)=h^*}\alpha_{z^*} = \hat E_{h^*}$.
Equation~\eqref{eq:p} may be read with either pair, since $q$ is a ratio.
The unnormalised $E_{h^*}$ is kept in what follows, since the weight of a record is proportional to $E_0^2+E_1^2$ and that quantity is identically $1$ in the hatted variables.
Subtracting the two branches gives the one line \Sim reduces to,
\begin{equation}\label{eq:diff}
  E_0 - E_1 \;=\; \sum_{\tilde z^*}
    \Bigl(\bigl|\mathcal A'_{(0,\tilde z^*),W'}\bigr|
        - \bigl|\mathcal A'_{(1,\tilde z^*),W'}\bigr|\Bigr)
    \cdot C_{(z',\tilde z^*),S,h'} .
\end{equation}
Each lemma guards one feature of Eq.~\eqref{eq:diff}.
Lemma~2 allows for the common factor $C$ outside the bracket.
Lemma~4 bounds the relative difference of the two counts, once they are aggregated over the low bits of $z^*$, by $O(n^{-((c-1)/2-1)})$.
Lemma~3 bounds each product $\alpha_{z^*}$ on its own, so that a small relative difference implies a small absolute one.

\subsection{Parameters}

Table~\ref{tab:params} lists the parameters that are fixed throughout our proof.
Our $\phi$ is written $b_1\cdots b_Q$ in \Sim, and our $\phi'$ is written $D$.
$n_j[r]$ counts the configurations of group $j$ whose subset sum equals $r$,
\begin{equation}
  n_j[r] \;=\; \#\bigl\{\phi_j\in\{0,1\}^m :
     \textstyle\sum_{i\in g_j}\phi_iy_i \equiv r\bigr\} .
\end{equation}

\begin{table}
\renewcommand{\arraystretch}{1.15}
\centering
\caption{Parameters. The sample count constant is written $K$, where \Sim\ writes
$k$, so that $k$ stays free for Fourier frequencies.}
\label{tab:params}
\begin{tabularx}{\linewidth}{@{}lLl@{}}
\toprule
Symbol & Meaning & Value \\
\midrule
$n$ & bit length & $N = 2^n$ \\
$c$ & group size constant & $\ge 12$ \\
$K$ & sample count constant & $> c$ \\
$Q$ & number of samples & $Kn^{c+1}$ \\
$m$ & bits per group & $c\log n$ \\
$G$ & number of groups & $Q/m$ \\
$a$ & number of $A$ groups & $n/\log n$ \\
$L$ & width of $s_j$ & $\log n$ \\
$\nu$ & length of one top block & $2^{\,n-L}$ \\
$Q_B$ & sample bits on the $B$ side & $(G-a)m$ \\
$1/(c'\log n)$ & faulty sample rate & $c'$ constant \\
\bottomrule
\end{tabularx}
\end{table}

\subsection{The probability space}\label{sub:prob}

One run produces the record $(Y,z',\phi',S,W',h')$, with $Y$ drawn at Step~1 and the rest supplied by the measurements of Steps~2 to 7.
Every probability below lives in the joint law of these.
The results about the samples, including Lemma~\ref{lem:F}, Theorem~\ref{thm:G} and Lemma~\ref{lem:H}, are proved with $Y$ uniform on $\Z_N^{\,Q}$.
Theorem~\ref{thm:E} and \hyperlink{thm:Ep}{Theorem~E$'$} concern the measured string and are proved with the rest of the record held fixed and $\phi'_B$ varying.
The bound they give is the same number for every value of the rest.
A bound holding under every conditioning holds under the joint law as well, so the failure probabilities of the two kinds live in one space and may be added.
The second kind deserves a warning, since $\phi'$ fixes the partition and the partition belongs to the record.
Holding the rest of the record fixed therefore already restricts which $\phi'_B$ remain available.
Letting $\phi'_B$ range over the whole cube is exactly Hypothesis~(U) of Section~\ref{sec:l3}, and Theorem~\ref{thm:E} and \hyperlink{thm:Ep}{Theorem~E$'$} are stated under it.

\section{Lemma 1}\label{sec:l1}

\begin{quote}
\textbf{Lemma 1} \Sim.
\emph{If $k > c$, then with constant probability, the measured value of $D = b'_1\ldots b'_Q$ will have at least $n/\log n$ groups that consist of all zeroes.}
\end{quote}

Here $k$ is our $K$ and $D$ is our $\phi'$.
This section proves the lemma in the stronger form that the probability tends to one.
The route is a direct moment computation on the number of all-zero groups, and the two transforms it needs are recorded first.

The argument in \Sim\ runs in two stages.
The first stage shows that zero blocks of $\phi'$ are at least as common as one blocks.
The second maps the records carrying fewer than $a$ all-zero groups into the records carrying at least $a$, by exchanging positions inside $\phi'$ and exchanging the low $n-2\log n$ bits of the corresponding weights $y_i$.
Those low bits are taken to be distinct across the sample, which makes the map invertible.
Comparing the two probabilities asks for more than invertibility.
Moving the low bits alone changes $\sum_i\phi_iy_i$, so the set $Z'_Y$ of strings compatible with the measured $z'$ is not carried onto its image counterpart.
The moment computation below reaches the conclusion without any such map.

\subsection{Two Fourier transforms}\label{sub:fourier}

Two different transforms appear below.
On the cube of measurement outcomes we use the Walsh--Hadamard transform \cite{ODo14},
\begin{equation}
  \widehat F(\phi') = \sum_{\phi\in\{0,1\}^Q}F[\phi]\,(-1)^{\phi\cdot\phi'} ,
\end{equation}
and on the group $\Z_N$ of subset sums we use
\begin{align}
  \hat f(k) = \sum_{r\in\Z_N}f[r]\,\omega^{-kr},
\end{align}
satisfying that
\begin{align}
  \sum_r f[r]\,\overline{g[r]} = \frac1N\sum_k \hat f(k)\,\overline{\hat g(k)} .
\end{align}
The second identity follows by expanding both sides and using $\frac1N\sum_k\omega^{k(r'-r)} = \delta_{rr'}$.

We also write $\langle f\rangle_k = \frac1N\sum_k f(k)$ and $\Var_k(f) = \langle f^2\rangle_k - \langle f\rangle_k^2$.
These measure the shape of a function on the frequency line and are not probabilistic quantities.

\subsection{One group}\label{sub:l1single}

Immediately after Step~4 the record consists of $(Y,z',\phi')$ and the basis states are labelled $(h,s_1\cdots s_G)$.
The amplitude at such a label is $\gamma\sum_{\phi\in\mathcal P}(-1)^{\phi\cdot\phi'}$ for a record-independent $\gamma$, where the consistency class is
\begin{align}
  \mathcal P(z, \vec s) = \bigl\{\phi\in\{0,1\}^Q : z(\phi)=z,\ s_i(\phi)=s_i\ \forall i\bigr\},
\end{align}
where $z = z'+2^{n-1}h$.
Every $\phi$ lies in exactly one class, so the classes partition the cube and $\sum_{\mathcal P}|\mathcal P| = 2^Q$.
Writing $\ind_{\mathcal P}(\phi)$ for the indicator function of a class
\begin{align}
    \ind_{\mathcal P}[\phi]=\begin{cases}1,& \phi \in \mathcal{P},\\ 0,&\text{otherwise},\end{cases}
\end{align}
we have
\begin{equation}
  \Pr[\phi'; \phi\in\mathcal{P}] = \gamma^2\Bigl(\sum_{\phi\in\mathcal P}(-1)^{\phi\cdot\phi'}\Bigr)^2
   = \gamma^2\bigl(\widehat{\ind_{\mathcal P}}(\phi')\bigr)^2 .
\end{equation}

\begin{lemma}\label{lem:A}
For every group $j$, every $Y$, and every fixed outcome of the earlier measurements, $\Pr[\phi'_j = 0 \mid \phi \in\mathcal{P}] \ge 2^{-m}$.
Summing over the classes, which removes the constraints imposed by $z'$, $h$ and $\vec s$,
\begin{equation}
  \Pr[\phi'_j = 0 \mid Y] = 2^{-2m}\sum_r n_j[r]^2 .
\end{equation}
\end{lemma}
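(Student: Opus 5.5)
The plan is to compute the marginal law of the block $\phi'_j$ exactly by Parseval on the remaining coordinates of the cube, and then to read off both claims from a fibre decomposition of the class $\mathcal P$ along group $j$.

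\emph{Conditional law.} Conditioned on $\phi\in\mathcal P$, the pre-Hadamard state is the uniform superposition over $\mathcal P$. The display above then gives
\[
\Pr[\phi'\mid\phi\in\mathcal P]=\bigl(\widehat{\ind_{\mathcal P}}(\phi')\bigr)^2\big/\bigl(2^Q|\mathcal P|\bigr).
\]
Normalisation is by Parseval on $\{0,1\}^Q$. Split $\phi=(\phi_j,\phi_{-j})$ and $\phi'=(\phi'_j,\phi'_{-j})$. Setting $\phi'_j=0$ kills the sign on group $j$, so
\[
\widehat{\ind_{\mathcal P}}(0,\phi'_{-j})=\sum_{\phi_{-j}}c_{\mathcal P}(\phi_{-j})\,(-1)^{\phi_{-j}\cdot\phi'_{-j}},
\qquad
c_{\mathcal P}(\phi_{-j})=\#\{\phi_j:(\phi_j,\phi_{-j})\in\mathcal P\}.
\]
This is the Walsh--Hadamard transform of the fibre-count function on $\{0,1\}^{Q-m}$. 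Summing its square over $\phi'_{-j}$ and applying Parseval on that smaller cube gives
\[
\Pr[\phi'_j=0\mid\phi\in\mathcal P]=\frac{2^{Q-m}\sum_{\phi_{-j}}c_{\mathcal P}(\phi_{-j})^2}{2^Q|\mathcal P|}
=2^{-m}\,\frac{\sum_{\phi_{-j}} c_{\mathcal P}(\phi_{-j})^2}{\sum_{\phi_{-j}} c_{\mathcal P}(\phi_{-j})}.
\]
Each $c_{\mathcal P}(\phi_{-j})$ is a nonnegative integer, so $c^2\ge c$ termwise, and the ratio is at least $1$. That is the first claim, and it uses nothing about $Y$ or about which earlier outcomes fixed the class.

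\emph{Averaging over classes.} Given $Y$, the measurements of $z'$, $h$ and $\vec s$ select the class $\mathcal P$ with probability $|\mathcal P|/2^Q$. This holds because every $\phi$ carries the same modulus and the classes partition the cube. The $|\mathcal P|$ factors therefore cancel, leaving
\[
\Pr[\phi'_j=0\mid Y]=2^{-Q-m}\sum_{\phi_{-j}}\sum_{\mathcal P}c_{\mathcal P}(\phi_{-j})^2 .
\]
The step that needs care is identifying, for fixed $\phi_{-j}$, how the classes carve up the $2^m$ choices of $\phi_j$. With $\phi_{-j}$ fixed, the labels $s_i$ for $i\ne j$ are constant. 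The full sum $z$ determines $r_j$, since $r_j=z-\sum_{i\notin g_j}\phi_iy_i$, and conversely $r_j$ determines both $z$ and $s_j=\topp(r_j)$.

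Hence two choices of $\phi_j$ lie in the same class exactly when they share the subset sum $r_j$. The nonzero values $c_{\mathcal P}(\phi_{-j})$ are then precisely the counts $n_j[r]$, each $r$ appearing once. This gives $\sum_{\mathcal P}c_{\mathcal P}(\phi_{-j})^2=\sum_r n_j[r]^2$ independently of $\phi_{-j}$. Multiplying by the $2^{Q-m}$ choices of $\phi_{-j}$ yields $2^{-2m}\sum_r n_j[r]^2$.

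The main thing to verify is that the class label genuinely contains the full $z$, with $h$ included, and not only its low bits. Otherwise two $\phi_j$ with different $r_j$ could fall in one class and the sum of squares would change. I would check this against the definition $z=z'+2^{n-1}h$ at Step~4, where $h$ is part of the basis label and hence of $\mathcal P$.
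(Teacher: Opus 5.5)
Your proof is correct and is essentially the paper's argument. Your fibre sum $\sum_{\phi_{-j}}c_{\mathcal P}(\phi_{-j})^2$ is exactly the paper's count $A_{\mathcal P}$ of pairs in $\mathcal P$ differing only inside $g_j$, your bound $c^2\ge c$ is its remark that the diagonal alone contributes $|\mathcal P|$, and your identification of the classes over a fixed $\phi_{-j}$ with the values of $r_j$ is the step behind $\sum_{\mathcal P}A_{\mathcal P}=2^{Q-m}\sum_r n_j[r]^2$. The only difference is presentational: you obtain the $2^{Q-m}$ factor by Parseval on the smaller cube, in the style of Lemma~\ref{lem:D}, whereas the paper expands the square and sums the character over the $2^{Q-m}$ strings with $\phi'_j=0$.
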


\begin{proof}
Expanding the square gives
\begin{align}
    \bigl(\widehat{\ind_{\mathcal P}}(\phi')\bigr)^2 = \sum_{\phi,\psi\in\mathcal P}(-1)^{(\phi\oplus\psi)\cdot\phi'}.
\end{align}
Summing over all $\phi'$ leaves only the terms with $\phi\oplus\psi = 0$, so
\begin{align}
    \Pr[\phi\in\mathcal{P}] = \sum_{\phi'}\Pr[\phi', \phi\in\mathcal{P}] = \gamma^2 2^Q|\mathcal P|.   
\end{align}
Summing instead over the $2^{\,Q-m}$ strings with $\phi'_j = 0$ leaves the terms whose difference is supported inside $g_j$, which means that
\begin{align}
  \Pr[\phi_j'=0; \phi\in\mathcal{P}] = \sum_{\phi'\,:\,\phi'_j=0}\Pr[\phi'; \phi\in\mathcal{P}] = \gamma^2 2^{\,Q-m}A_{\mathcal P} ,
\end{align}
where
\begin{align}
  A_{\mathcal P} = \#\bigl\{(\phi,\psi)\in\mathcal P^2 : \phi\oplus\psi\ \text{supported in } g_j\bigr\} .
\end{align}
Dividing gives
\begin{align}
    \Pr[\phi'_j=0\mid \phi\in\mathcal{P}] = 2^{-m}A_{\mathcal P}/|\mathcal P|,
\end{align}
which is at least $2^{-m}$ because the diagonal $\phi=\psi$ already contributes $|\mathcal P|$.

Now let us sum over the classes.
Suppose $\phi$ and $\psi$ lie in the same class, so that $z(\phi)=z(\psi)$, and suppose they also agree outside $g_j$.
Then $z(\phi)-z(\psi) = r_j(\phi)-r_j(\psi)$, so $r_j(\phi) = r_j(\psi)$.
The constraint on $s_j$ is then satisfied automatically.
Choosing the bits outside $g_j$ freely,
\begin{equation}
  \sum_{\mathcal P}A_{\mathcal P}
   = \underbrace{2^{\,Q-m}}_{\text{bits outside } g_j}
     \times\underbrace{\sum_r n_j[r]^2}_{r_j(\phi_j)=r_j(\psi_j)} ,
\end{equation}
and dividing by $\sum_{\mathcal P}|\mathcal P| = 2^Q$ gives the stated form.
\end{proof}

\subsection{Two groups}\label{sub:l1pair}

\begin{lemma}\label{lem:B}
For $j\ne j'$ and every $Y$,
\begin{align}
  \Pr\bigl[\phi'_j = \phi'_{j'} = 0 \mid Y\bigr] = 2^{-4m}\,\frac1N\sum_k T_j(k)\,T_{j'}(k) ,
\end{align}
where
\begin{align}
  T_j(k) = \sum_s\bigl|\hat w_j^{(s)}(k)\bigr|^2
\end{align}
and $w_j^{(s)}[r] = n_j[r]\,\ind[\topp(r)=s]$ being the restriction of $n_j$ to one top block.
\end{lemma}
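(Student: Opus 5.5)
The plan is to repeat the argument of Lemma~\ref{lem:A} with the single group $g_j$ replaced by the pair $g_j\cup g_{j'}$. Then the resulting pair count is evaluated with the convolution form of the $\Z_N$ Parseval identity from Section~\ref{sub:fourier}.

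\emph{Step one: sum the amplitudes over the allowed $\phi'$.} For a fixed class $\mathcal P$, expand $\bigl(\widehat{\ind_{\mathcal P}}(\phi')\bigr)^2=\sum_{\phi,\psi\in\mathcal P}(-1)^{(\phi\oplus\psi)\cdot\phi'}$. Then sum over the $2^{Q-2m}$ strings with $\phi'_j=\phi'_{j'}=0$. The only surviving pairs are those with $\phi\oplus\psi$ supported in $g_j\cup g_{j'}$, and each contributes $2^{Q-2m}$. The normalisation is fixed as in Lemma~\ref{lem:A}: from $\Pr[\phi\in\mathcal P]=\gamma^2 2^Q|\mathcal P|$ and $\sum_{\mathcal P}|\mathcal P|=2^Q$ we get $\gamma^2=2^{-2Q}$. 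Summing over all classes then gives
\begin{equation*}
\Pr\bigl[\phi'_j=\phi'_{j'}=0\mid Y\bigr]=2^{-2Q}\,2^{Q-2m}\sum_{\mathcal P}A^{(jj')}_{\mathcal P},
\end{equation*}
where $A^{(jj')}_{\mathcal P}$ counts the ordered pairs in $\mathcal P^2$ that agree outside $g_j\cup g_{j'}$.

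\emph{Step two: characterise the pairs.} Take $\phi,\psi$ that agree outside the two groups. They lie in the same class exactly when three conditions hold:
\begin{itemize}
\item $z(\phi)=z(\psi)$, which is equivalent to $r_j(\phi)+r_{j'}(\phi)\equiv r_j(\psi)+r_{j'}(\psi)\pmod N$;
\item $\topp(r_j(\phi))=\topp(r_j(\psi))$;
\item $\topp(r_{j'}(\phi))=\topp(r_{j'}(\psi))$.
\end{itemize}
All other $s_i$ agree automatically, and $h$ agrees because the whole of $z$ agrees. The bits outside the two groups are free and give a factor $2^{Q-2m}$. The total is therefore $2^{Q-2m}$ times
\begin{equation*}
M=\sum_{s,s'}\ \sum_{r_1+r_2\equiv r_3+r_4} w_j^{(s)}[r_1]\,w_{j'}^{(s')}[r_2]\,w_j^{(s)}[r_3]\,w_{j'}^{(s')}[r_4].
\end{equation*}
Here $s$ and $s'$ are the common top blocks of the two groups, and the counts $n_j$ have been restricted to those blocks. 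Collecting the powers of two gives $2^{-2Q}\,2^{Q-2m}\,2^{Q-2m}M=2^{-4m}M$, which is the claimed prefactor.

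\emph{Step three: evaluate $M$ by Parseval.} For fixed $(s,s')$, let $f=w_j^{(s)}*w_{j'}^{(s')}$ be the cyclic convolution on $\Z_N$. The inner sum is then $\sum_r f[r]^2$. The function $f$ is real, so the identity of Section~\ref{sub:fourier} gives $\sum_r f[r]^2=\frac1N\sum_k|\hat f(k)|^2$. The convolution theorem gives $\hat f=\hat w_j^{(s)}\hat w_{j'}^{(s')}$. The sum over $(s,s')$ factorises as $\frac1N\sum_k\bigl(\sum_s|\hat w_j^{(s)}(k)|^2\bigr)\bigl(\sum_{s'}|\hat w_{j'}^{(s')}(k)|^2\bigr)=\frac1N\sum_k T_j(k)T_{j'}(k)$, which is the statement.

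I expect the main obstacle to be step two rather than the Fourier algebra. The point to get right is that the class constraint couples the two groups only through the sum of their residues, while the top-block constraints act on each group separately. This is exactly why the block restrictions $w^{(s)}$ appear inside the squared transforms and why $T_j$, rather than $|\hat n_j|^2$, is the right object. A second point to check is that $j\ne j'$, so the two convolution factors come from disjoint sets of bits and are independent in the count.
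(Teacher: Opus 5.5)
Your proof is correct and follows the paper's route. It uses the same character-sum reduction from Lemma~\ref{lem:A} with $\gamma^2=2^{-2Q}$ and the same characterisation of same-class pairs through $r_j+r_{j'}$ and the two separate top blocks. The one difference is in the last step: you index the four-fold sum by the common value $r_1+r_2=r_3+r_4$ and apply Parseval to the convolution $w_j^{(s)}*w_{j'}^{(s')}$ for each $(s,s')$, while the paper indexes it by the traded offset $\delta$ and applies Parseval to the block-restricted autocorrelations $R_j,R_{j'}$, using that $R$ is even and $\widehat{R_j}=T_j$.
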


\begin{proof}
The character sum of Lemma~\ref{lem:A} applies with $\{\phi'_j=0\}$ replaced by $\{\phi'_j=\phi'_{j'}=0\}$.
There are $2^{\,Q-2m}$ such strings, therefore
\begin{align}
  \Pr\bigl[\phi'_j=\phi'_{j'}=0 \mid Y\bigr] = 2^{-2m}\cdot\frac{\sum_{\mathcal P}A^{(2)}_{\mathcal P}}{2^Q} ,
\end{align}
where
\begin{align}
  A^{(2)}_{\mathcal P} = \#\bigl\{(\phi,\psi)\in\mathcal P^2 : \phi\oplus\psi\ \text{supported in } g_j\cup g_{j'}\bigr\} .
\end{align}
Let $\phi$ and $\psi$ agree outside $g_j\cup g_{j'}$.
Every $s_i$ with $i\ne j,j'$ then agrees automatically, and
\begin{equation}
  z(\phi)-z(\psi) = \bigl[r_j(\phi)-r_j(\psi)\bigr]
    + \bigl[r_{j'}(\phi)-r_{j'}(\psi)\bigr] .
\end{equation}
So the pair lies in one class exactly when $r_j+r_{j'}$ is preserved (from $z$) and $\topp (r_j)$ and $\topp (r_{j'})$ are each preserved (from $s_j$ and $s_{j'}$).
In this case, the two groups may trade a common offset
\begin{equation}
  \delta := r_j(\phi)-r_j(\psi) = r_{j'}(\psi)-r_{j'}(\phi).
\end{equation}
Define the autocorrelation
\begin{align}
  R_j[\delta] = \!\!\sum_{\substack{r,r'\,:\ r-r'=\delta\\ \topp(r)=\topp(r')}}\!\! n_j[r]\,n_j[r'] = \sum_s\bigl(w_j^{(s)}\star w_j^{(s)}\bigr)[\delta],
\end{align}
where we define
\begin{align}
  (f\star f)[\delta] = \sum_r f[r]f[r-\delta] .
\end{align}
Pulling out the $2^{\,Q-2m}$ choices of the bits outside the two groups and sorting the rest by $\delta$, we obtain
\begin{equation}
\begin{split}
  \sum_{\mathcal P}A^{(2)}_{\mathcal P} &= 2^{\,Q-2m}\sum_{\delta\in\Z_N}R_j[\delta]\,R_{j'}[-\delta] \\
  &= 2^{\,Q-2m}\sum_{\delta\in\Z_N}R_j[\delta]\,R_{j'}[\delta] ,
\end{split}
\end{equation}
because $R$ is even.
Finally, we use the fact that
\begin{equation}
\begin{split}
  \widehat{f\star f}(k)
   &= \sum_\delta\sum_r f[r]f[r-\delta]\,\omega^{-k\delta} \\
   &\overset{r'=r-\delta}{=} \sum_{r,r'}f[r]f[r']\,\omega^{-k(r-r')} \\
   &= \bigl|\hat f(k)\bigr|^2 ,
\end{split}
\end{equation}
so $\widehat{R_j}(k) = T_j(k)$, and the $T_j(k)$ of Lemma~\ref{lem:B} is the Fourier transform of the autocorrelation $R_j[\delta]$.
The Parseval identity then turns $\sum_\delta R_j[\delta]R_{j'}[\delta]$ into $\frac1N\sum_k T_j(k)T_{j'}(k)$ and finishes the proof.
\end{proof}

Since $T_j$ is the Fourier transform of $R_j$, the inverse transform at $\delta=0$ gives $\langle T_j\rangle_k = \sum_r n_j[r]^2$.
Lemma~\ref{lem:A} therefore also reads $\Pr[\phi'_j=0\mid Y] = 2^{-2m}\langle T_j\rangle_k$, and the two lemmas have the same shape.

\subsection{Averaging over the weights}\label{sub:l1thm}
Every probability written so far has been conditional on one fixed $Y$, with the Hadamard measurement of Step~4 supplying the only randomness.
The weights are produced by the algorithm itself at Step~1, where the Fourier measurement on the second register of a sample returns a $y_i$ uniform on $\Z_N$.
The $Q$ samples are prepared independently, so
\begin{equation}\label{eq:Yunif}
  Y = (y_1,\dots,y_Q) \;\sim\; \mathrm{Unif}\bigl(\Z_N^{\,Q}\bigr) .
\end{equation}
Theorem~\ref{thm:C} below is a statement about a run of the whole algorithm, so its probability space carries $Y$ as well, and from here on $\E$ denotes that average.
Now we calculate the expected second moment of a single group's subset-sum count.
Write
\begin{equation}
  \beta \;=\; \frac{2^m-1}{N}
\end{equation}
for the chance that a fixed nonempty subset of one group hits a fixed value.
The result is
\begin{equation}\label{eq:uu}
\begin{split}
  \E\bigl[n_j[r]\,n_j[r']\bigr]
   &= \ind[r{=}r'{=}0]
      + \beta\bigl(\ind[r{=}0]+\ind[r'{=}0]\bigr) \\
   &\quad + \beta\,\delta_{rr'} + \frac{4^m-3\cdot2^m+2}{N^2} ,
\end{split}
\end{equation}
whose four terms come from the pairs where both subsets are empty, where exactly one is, where the two coincide and are nonempty, and where they differ and neither is empty (Appendix~\ref{app:B}).
Carrying Eq.~\eqref{eq:uu} through $T_j(k) = \sum_s\sum_{r,r'\in s}n_j[r]n_j[r']\,\omega^{-k(r-r')}$ gives
\begin{equation}\label{eq:ET}
\begin{split}
  \E T(k) &= 2^m + \beta_2 D_\nu(k)
     + \beta_1\,\mathrm{Re}\,\hat\ind_{B_0}(k), \\
  \beta_1 &= 2\beta, \qquad
  \beta_2 = \frac{(4^m-3\cdot2^m+2)\,2^L}{N^2} ,
\end{split}
\end{equation}
where $B_0 = \{0,1,\dots,\nu-1\}$ is the bottom top-block (i.e., those with zero top bits) and $D_\nu(k) = |\hat\ind_{B_0}(k)|^2$ is the Fej\'er kernel of order $\nu$.
Two averages of that kernel are also computed there
\begin{equation}\label{eq:fejer}
  \langle D_\nu\rangle_k = \nu,
  \qquad
  \langle D_\nu^2\rangle_k = \frac{\nu(2\nu^2+1)}{3} \le \nu^3 .
\end{equation}

\begin{theorem}\label{thm:C}
Let $\mathcal Z = \sum_{j=1}^G\ind[\phi'_j = 0]$ be the number of all-zero groups.
Under the hypothesis $K>c$ of \Sim, $\Pr[\mathcal Z\ge a]\to1$.
This is Lemma~1 of \Sim\ with probability tending to one in place of a constant.
\end{theorem}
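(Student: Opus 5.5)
We prove Theorem~\ref{thm:C} by the second-moment method applied to $\mathcal Z$ under the joint law of $(Y,\phi')$. Lemmas~\ref{lem:A} and~\ref{lem:B} give the conditional first and second moments for fixed $Y$. We average them with Eq.~\eqref{eq:uu} and Eq.~\eqref{eq:ET}, show $\E\mathcal Z\gg a$ and $\Var\mathcal Z = o\bigl((\E\mathcal Z)^2\bigr)$, and finish with Chebyshev.

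\emph{First moment.} Lemma~\ref{lem:A} gives $\Pr[\phi'_j=0\mid Y]=2^{-2m}\sum_r n_j[r]^2$. Setting $r=r'$ in Eq.~\eqref{eq:uu} and summing over $r\in\Z_N$ gives
\begin{equation}
\E\sum_r n_j[r]^2 = 1+2\beta+N\beta+\frac{4^m-3\cdot2^m+2}{N}=2^m+O(4^m/N).
\end{equation}
Hence $p_1:=\Pr[\phi'_j=0]=2^{-m}(1+o(1))=n^{-c}(1+o(1))$. With $G=Q/m=Kn^{c+1}/(c\log n)$ groups,
\begin{equation}
\E\mathcal Z=Gp_1=\frac{K n}{c\log n}(1+o(1))=\frac{K}{c}\,a\,(1+o(1)).
\end{equation}
Because $K>c$, this exceeds $a$ by a constant factor $K/c>1$. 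This is the only place the hypothesis enters.

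\emph{Second moment.} For $j\neq j'$, Lemma~\ref{lem:B} gives the pair probability $2^{-4m}\frac1N\sum_k T_j(k)T_{j'}(k)$. The weights in groups $j$ and $j'$ are disjoint, so the $y_i$ are independent across the two groups. The expectation therefore factors as $2^{-4m}\langle \E T\cdot \E T\rangle_k$ with $\E T$ given by Eq.~\eqref{eq:ET}. We expand the product. The constant term contributes $2^{2m}$, which reproduces $p_1^2$ to leading order; we must check that the corrections match $p_1^2$ exactly at the order $2^{-2m}$. The cross terms are controlled by $\langle D_\nu\rangle_k=\nu$ and by $\langle\mathrm{Re}\,\hat\ind_{B_0}\rangle_k=\ind_{B_0}[0]=1$. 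The quadratic terms are controlled by Eq.~\eqref{eq:fejer} together with Cauchy--Schwarz. We expect $\beta_2\nu\approx 4^m2^L\nu/N^2=4^m/N$ and $\beta_2^2\nu^3\approx 4^{2m}2^{2L}\nu^3/N^4=4^{2m}/(N\,2^L)$, both exponentially small compared with $2^{2m}=n^{2c}$, since $N=2^n$ while $2^m$ is only polynomial. This gives $\Pr[\phi'_j=\phi'_{j'}=0]=p_1^2(1+\epsilon)$ with $\epsilon$ exponentially small in $n$; this is the "correlation exponentially small" claim of the introduction.

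\emph{Conclusion.} Combining the two moments,
\begin{equation}
\Var\mathcal Z\le Gp_1+G^2p_1^2\epsilon=\E\mathcal Z+(\E\mathcal Z)^2\epsilon .
\end{equation}
Chebyshev then yields
\begin{equation}
\Pr[\mathcal Z<a]\le\frac{\Var\mathcal Z}{(\E\mathcal Z-a)^2}=O\!\left(\frac{1}{\E\mathcal Z}+\epsilon\right)\to0,
\end{equation}
because $\E\mathcal Z-a=\Theta(a)$ and $a=n/\log n\to\infty$.

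\emph{Main obstacle.} The hard step is the bookkeeping in the pair term. We must show the difference $\langle \E T\,\E T\rangle_k-\langle\E T\rangle_k^2$, which is the $\Var_k$ of $\E T$, is $o(2^{2m})$ rather than merely $O(2^{2m})$. Every term involving $\nu$ must therefore come with enough powers of $1/N$. The term $\beta_1^2\langle(\mathrm{Re}\,\hat\ind_{B_0})^2\rangle_k\le\beta_1^2\nu$ is about $4^m\nu/N^2$, which is negligible. We will also double-check that $\langle \E T\rangle_k$ agrees with $\E\sum_r n^2$, so that the leading terms cancel exactly and the remainder is bounded.
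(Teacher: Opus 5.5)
Your proposal is correct and follows essentially the same route as the paper. You average Lemmas~\ref{lem:A} and~\ref{lem:B} over $Y$ with the factorisation $\E[T_jT_{j'}]=\E T_j\,\E T_{j'}$, identify the pair excess as $2^{-4m}\Var_k(\E T)$ and bound it by the Fej\'er averages, then apply $\Var\mathcal Z\le \E\mathcal Z+(\E\mathcal Z)^2\epsilon$ and Chebyshev with $\E\mathcal Z\approx (K/c)a$. The paper is slightly more economical in two places: it uses the exact bound $\langle\E T\rangle_k\ge 2^m$ in place of your $1+o(1)$ asymptotics, and it uses the shift-invariance of $\Var_k$ to drop the linear cross terms you plan to track.
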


\begin{proof}
We take five steps.

\emph{Step 1.} Lemmas~\ref{lem:A} and \ref{lem:B} hold for each fixed $Y$, so
\begin{align}
  \Pr[\phi'_j=0\mid Y] &= 2^{-2m}\langle T_j\rangle_k , \\
  \Pr[\phi'_j=\phi'_{j'}=0\mid Y] &= 2^{-4m}\langle T_jT_{j'}\rangle_k .
\end{align}
Different groups use disjoint weights and all weights are independent, so $\E[T_jT_{j'}](k) = \E T_j(k)\,\E T_{j'}(k)$ pointwise in $k$.
The groups are identically distributed, so writing $\E T$ for the common value
\begin{align}
  p_1 &:= \Pr[\phi'_j=0] = 2^{-2m}\langle\E T\rangle_k , \\
  p_2 &:= \Pr[\phi'_j=\phi'_{j'}=0] = 2^{-4m}\langle(\E T)^2\rangle_k .
\end{align}

\emph{Step 2.} In general $\langle(\E T)^2\rangle_k \ne \langle\E T\rangle_k^2$, and the difference measures the correlation between two groups
\begin{align}
  p_2-p_1^2 = 2^{-4m}\Var_k(\E T) , \\
  \eta := \frac{p_2-p_1^2}{p_1^2} = \frac{\Var_k(\E T)}{\langle\E T\rangle_k^2},
\end{align}
where $\eta$ is the normalised correlation coefficient of two groups being simultaneously all-zero.
In this sense, a question about probabilistic independence has become a question about the shape of one function.

\emph{Step 3.} Write $Z_j = \ind[\phi'_j=0]$ for the indicator of group $j$.
Each $Z_j$ is Bernoulli with parameter $p_1$ and every pair has covariance $p_1^2\eta$, so
\begin{equation}
\begin{split}
  \Var[\mathcal Z] &= \underbrace{\sum_j\Var[Z_j]}_{Gp_1(1-p_1)} + \underbrace{\sum_{j\ne j'}\Cov(Z_j,Z_{j'})}_{G(G-1)p_1^2\eta}  \\
  &\le\; Gp_1 + G^2p_1^2\eta .
\end{split}
\end{equation}
Since $\E[\mathcal Z] = Gp_1$, dividing by $\E[\mathcal Z]^2$ gives
\begin{equation}\label{eq:varZ}
  \frac{\Var[\mathcal Z]}{\E[\mathcal Z]^2}
   \;\le\; \frac1{\E[\mathcal Z]} + \eta .
\end{equation}
The first is the Poisson fluctuation of $G$ nearly independent coins, and the second is the excess contributed by their dependence.

\emph{Step 4.} It remains to bound $\eta$.
In Eq.~\eqref{eq:ET} the term $2^m$ is constant in $k$, and $\Var_k$ is invariant under adding a constant, so only the two bumps contribute.
From $(x+y)^2\le2x^2+2y^2$ and $\Var_k \le \langle\,\cdot^2\rangle_k$, we have
\begin{equation}
  \Var_k(\E T) \;\le\;
   2\beta_2^2\langle D_\nu^2\rangle_k
   + 2\beta_1^2\bigl\langle(\mathrm{Re}\,\hat\ind_{B_0})^2\bigr\rangle_k .
\end{equation}
By Eq.~\eqref{eq:fejer} the first average is at most $\nu^3$, and the second is at most $\langle|\hat\ind_{B_0}|^2\rangle_k = \nu$.
Using $\beta_2\le 4^m2^L/N^2$, $\beta_1 = 2\beta \le 2\cdot2^m/N$ and $\nu = N/2^L$, we bound the variance as
\begin{equation}
\begin{split}
  \beta_2^2\nu^3 &\le \frac{16^m}{N2^L}, \qquad
  \beta_1^2\nu \le \frac{4\cdot4^m}{N2^L}, \\
  \text{so}\quad \Var_k(\E T) &\le \frac{2\cdot16^m + 8\cdot4^m}{N2^L} .
\end{split}
\end{equation}
For the denominator, $\langle\E T\rangle_k\ge2^m$ gives $\langle\E T\rangle_k^2\ge4^m$.
This gives
\begin{equation}
\begin{split}
  \eta \; &\le\; \frac{2\cdot4^m+8}{N2^L} \;=\; \frac{2n^{2c}+8}{n\,2^n}  \\
  &\le\; \frac{2n^{2c-1}+8}{2^n} \;=\; 2^{-n+O(\log n)} ,
\end{split}
\end{equation}
using $2^m = n^c$, $2^L = n$ and $N = 2^n$.
The correlation between two groups is therefore exponentially small.

\emph{Step 5.} By Lemma~\ref{lem:A}, $\E[\mathcal Z]\ge G2^{-m} = (K/c)a$, so $a \le (c/K)\E[\mathcal Z]$ with $c/K<1$.
Chebyshev's inequality together with Eq.~\eqref{eq:varZ} gives
\begin{equation}
\begin{split}
  \Pr[\mathcal Z \le a]
   &\le\; \Pr\Bigl[\bigl|\mathcal Z-\E\mathcal Z\bigr| \ge \bigl(1-\tfrac cK\bigr)\E\mathcal Z\Bigr] \\
  &\le\; \Bigl(1-\frac cK\Bigr)^{-2} \Bigl(\frac{c}{Ka} + \eta\Bigr) ,
\end{split}
\end{equation}
where the last step also uses $1/\E[\mathcal Z]\le c/(Ka)$.
Here $c$ and $K$ are constants, $a = n/\log n\to\infty$, and $\eta\to0$ exponentially, so the right side tends to zero.
\end{proof}

Therefore, Lemma~1 holds with a conclusion stronger than \Sim\ claimed.

\section{Lemma 3}\label{sec:l3}

\begin{quote}
\textbf{Definition 1} \Sim.
\emph{For a given measured $z'$, and a resulting superposition $\psi$ of values of $h^*$ at the end of step 7, let $M$ be the set of values $(Y, D, W', S, h')$ measured (apart from $z'$) during the algorithm, let $T^+_M$ (resp., $T^-_M$) be the set of states $\phi$ consistent with $M$ and $h^*$ and with phase $1$ (resp., $-1$) resulting from measurement of $M$ (ignoring the constant phase $\omega^{yz'd}$).
Let $T_M = T^+_M\cup T^-_M$, and let $t_M$ (resp., $t^+_M$, $t^-_M$) be $|T_M|$ (resp., $|T^+_M|$, $|T^-_M|$).
Then $\psi$ is well-behaved for a value of $h^*$ if $|t^+_M - t^-_M| \ge \Omega(2^{-n/2}\sqrt{t_M})$.}

\medskip \textbf{Lemma 3} \Sim.
\emph{With probability $1-O(2^{-n})$ (over choices of $M$), the superposition $\psi$ at the end of step 7 is well-behaved for at least one value of $h^*$.
Moreover, if it's well-behaved for a particular $h^*$, then the probability (over choices of $M$) that for some $z^*$ consistent with that value of $h^*$ the amplitude $\alpha_{z^*} = \nu_4\sum_{\phi_A\in A_{z^*,W'}}C_{\bar z^*}$ has magnitude greater than $\Omega(2^{3n/2})$ is less than $O(2^{-n})$.}
\end{quote}

In our notation $D$ is $\phi'$, $\nu_4$ is $\lambda_4 = (E_0^2+E_1^2)^{-1/2}$, $\bar z^*$ is $\tilde z^*$, and $A_{z^*,W'}$ is $\mathcal A'_{z^*,W'}$, all of which appear in Table~\ref{tab:sets}.
The set $T_M$ of Definition~1 and its size $t_M$ have nothing to do with the Fourier-side quantity $T_j(k)$ of Lemma~\ref{lem:B}.

Both halves in \Sim\ rest on a variance computation for $t^\pm_M$, in which the signs $(-1)^{\phi\cdot\phi'}$ are taken to be pairwise independent as $\phi'$ varies.
However, a sign depends on $\phi$ only through $\phi_B$ since $\phi'_A = 0$, so two states of $T_M$ that share a $\phi_B$ carry the same sign.
Lemma~\ref{lem:D} below reads the same object as an exact Walsh transform, which asks nothing of the signs.

Separately from the signs, the proof of Lemma~3 in \Sim\ implies a hypothesis that the algorithm steps alone do not give.
It sums $\phi'_B$ over the whole cube $\{0,1\}^{Q_B}$ with the rest of the record held fixed.
The set $A$ consists of the \emph{first} $a$ all-zero groups, so the partition is itself a function of $\phi'$ and the admissible strings form a proper subset of the cube.
We take the freedom as a hypothesis and carry it in the statements.
Theorem~\ref{thm:E} and \hyperlink{thm:Ep}{Theorem~E$'$} are both stated under Hypothesis~(U), as is the result of Appendix~\ref{app:E}.

\begin{hypU}
The $A/B$ partition is fixed and does not vary with $\phi'$.
With the rest of the record held fixed, $\phi'_B$ therefore ranges over all of $\{0,1\}^{Q_B}$.
\end{hypU}

Here we summarize our proof of Lemma~3.
It is the only one of the four that involves both the size of an amplitude and the probability of a record.
Its second half is the one the argument consumes, and \Sim\ states that half conditionally for an $h^*$ at which the superposition is \emph{well-behaved}.
The first half supplies that predicate with high probability.
The proof below carries no such condition, because it runs a Parseval identity directly on the distribution of records.
The normalisation $\sqrt{E_0^2+E_1^2}$ that defines $\alpha_{z^*}$ is the weight of the record itself.
Appendix~\ref{app:E} records the first half regardless, in the sharper form that the failure probability is $\varepsilon^2$ at every threshold $\varepsilon$.

The second half is a statement about the probability of a record.
Once $\phi'$ is measured, the state left by Section~\ref{sec:setup} is
\begin{equation}
  E_0(\phi')\,|0\rangle \;+\; (-1)^{d_n}E_1(\phi')\,|1\rangle .
\end{equation}
The probability of that record is the squared norm of this unnormalised vector, so it is proportional to $E_0(\phi')^2 + E_1(\phi')^2$.
Theorem~\ref{thm:E} is proved directly on the record distribution, whose weights $E_0(\phi')^2+E_1(\phi')^2$ are not uniform in $\phi'$.
The constant of proportionality is fixed by $Q$, $N$, $L$ and the number of faulty samples, so it does not depend on the record and cancels from every ratio below.
That non-uniformity is not itself an obstacle, since the weight enters the numerator and the denominator of Step~4 alike and leaves the value unchanged.

\subsection{The signed count as a Walsh transform}\label{sub:l3D}

Recall that $\phi'_A = 0$ and $(-1)^{\phi\cdot\phi'} = (-1)^{\phi_B\cdot\phi'_B}$, so the whole content of $\phi'$ sits in its $B$-side part $\phi'_B$.
Define
\begin{equation}
  \mathcal N_{h^*}(\phi_B) \;=\;
   \#\bigl\{\phi_A : (\phi_A,\phi_B)\in T_M \text{ for that } h^*\bigr\}
\end{equation}
as the number of $A$-side completions of a given $B$-side configuration, which takes nonnegative integer values.

\begin{lemma}\label{lem:D}
$E_{h^*}(\phi') = \widehat{\mathcal N_{h^*}}(\phi'_B)$ (the Walsh--Hadamard transform on $\{0,1\}^{Q_B}$), and consequently
\begin{equation}\label{eq:parseval}
  \sum_{\phi'_B\in\{0,1\}^{Q_B}}E_{h^*}(\phi')^2 \;=\; 2^{Q_B}\sum_{\phi_B}\mathcal N_{h^*}(\phi_B)^2 .
\end{equation}
\end{lemma}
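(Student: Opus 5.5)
The proof has two parts: first rewrite $E_{h^*}$ as a sum over the full set $T_M$, then regroup that sum by the $B$-side configuration. By Eq.~\eqref{eq:master}, $E_{h^*}$ is the signed count of all $\phi$ compatible with the record and with the given $h^*$. Unpacking the product $|\mathcal A'_{(h^*,\tilde z^*),W'}|\,C_{(z',\tilde z^*),S,h'}$ gives
\begin{equation*}
  E_{h^*}(\phi') \;=\; \sum_{(\phi_A,\phi_B)\in T_M}(-1)^{\phi\cdot\phi'} ,
\end{equation*}
with the sum taken over $T_M$ for that $h^*$. The $A$-side factor counts completions with sign $+1$, and each $\phi_B$ in $\mathcal B'$ carries the sign $(-1)^{\phi_B\cdot\phi'_B}$. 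We check that the pairs $(\phi_A,\phi_B)$ with $\phi_A\in\mathcal A'_{(h^*,\tilde z^*),W'}$ and $\phi_B\in\mathcal B'_{(z',\tilde z^*),S,h'}$, taken over all $\tilde z^*$, are exactly the members of $T_M$ for that $h^*$. This is Lemma~2 together with Table~\ref{tab:sets}: the $z$ constraint splits as $z^*+z_B$, the top-block constraints separate by side, and $h$ is determined by $h'$ and $h^*$.

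Next we use $\phi'_A=0$, so that $(-1)^{\phi\cdot\phi'}=(-1)^{\phi_B\cdot\phi'_B}$. Grouping the terms of the sum by $\phi_B$, each $\phi_B$ appears exactly $\mathcal N_{h^*}(\phi_B)$ times, which gives
\begin{equation*}
  E_{h^*}(\phi') \;=\; \sum_{\phi_B\in\{0,1\}^{Q_B}}\mathcal N_{h^*}(\phi_B)\,(-1)^{\phi_B\cdot\phi'_B}
  \;=\; \widehat{\mathcal N_{h^*}}(\phi'_B) .
\end{equation*}
Under Hypothesis~(U) the function $\mathcal N_{h^*}$ is a fixed function on the cube $\{0,1\}^{Q_B}$, and it does not change as $\phi'_B$ varies. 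This is exactly what we need before summing over $\phi'_B$ in the second part.

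The Parseval identity then follows from the same computation used in the proof of Lemma~\ref{lem:A}. Expanding the square gives
\begin{equation*}
  \sum_{\phi'_B}\widehat{\mathcal N}(\phi'_B)^2=\sum_{\phi_B,\psi_B}\mathcal N(\phi_B)\mathcal N(\psi_B)\sum_{\phi'_B}(-1)^{(\phi_B\oplus\psi_B)\cdot\phi'_B} .
\end{equation*}
The inner character sum equals $2^{Q_B}\delta_{\phi_B\psi_B}$, and substituting this yields Eq.~\eqref{eq:parseval}.

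The main obstacle is the first part: checking that the record constraints factor cleanly, so that $T_M$ restricted to $h^*$ is a disjoint union of products of $A$-sets and $B$-sets, indexed by $\tilde z^*$. A second delicate point is that $\mathcal N_{h^*}$ does not itself depend on $\phi'_B$. The sets $S$, $W'$ and $h'$ are held fixed, and only the partition could introduce a dependence, which Hypothesis~(U) excludes. The transform identity holds pointwise for each $\phi'$ without this hypothesis, but the Parseval sum needs it.
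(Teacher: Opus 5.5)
Your proposal is correct and follows the paper's proof. It uses $\phi'_A=0$ to group the terms of $T_M$ by $\phi_B$, which gives the Walsh--Hadamard transform of $\mathcal N_{h^*}$; then it expands the square and lets the character sum over the full cube collapse to the diagonal. Your two additions are consistent with the paper: the check that $T_M$ is the disjoint union over $\tilde z^*$ of the products $\mathcal A'\times\mathcal B'$ (the paper takes this as given), and the remark that only the Parseval step needs Hypothesis~(U) (the paper makes this point in Section~\ref{sec:rests}).
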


\begin{proof}
We first group the states of $T_M$ by their $B$-side part.
States in one group share a $\phi_B$ and therefore the same sign, which may be pulled outside the inner sum,
\begin{equation}\label{eq:walsh}
\begin{split}
  E_{h^*}(\phi') &= \sum_{\phi\in T_M}(-1)^{\phi_B\cdot\phi'_B} \\
  &= \sum_{\phi_B}\Bigl(\underbrace{\sum_{\phi_A} \ind\bigl[(\phi_A,\phi_B)\in T_M\bigr]}_{ =\;\mathcal N_{h^*}(\phi_B)}\Bigr)(-1)^{\phi_B\cdot\phi'_B} ,
\end{split}
\end{equation}
and the right side is the Walsh--Hadamard transform of $\mathcal N_{h^*}$ evaluated at $\phi'_B$.

Next, square and sum over $\phi'_B$,
\begin{equation}
\begin{split}
  \sum_{\phi'_B}E_{h^*}(\phi')^2
   &= \sum_{\phi_B,\psi_B}\mathcal N_{h^*}(\phi_B)\,\mathcal N_{h^*}(\psi_B) \\
   &\times \sum_{\phi'_B\in\{0,1\}^{Q_B}}(-1)^{(\phi_B\oplus\psi_B)\cdot\phi'_B}\\
   &=2^{Q_B}\sum_{\phi_B}\mathcal{N}_{h^*}(\phi_B)^2,
\end{split}
\end{equation}
where the inner sum is a character sum and only the diagonal term $\phi_B=\psi_B$ survives.
\end{proof}

\subsection{The amplitude bound}\label{sub:l3F}

Write $\Delta_{z^*}$ for the single term of Eq.~\eqref{eq:master} carried by one value of $z^*$
\begin{align}
  E_{h^*}(\phi') = \sum_{\tilde z^*\in\{0,1\}^{n-1}}\Delta_{z^*}(\phi') , \\
  \Delta_{z^*}(\phi') = \bigl|\mathcal A'_{z^*,W'}\bigr|\, C_{(z',\tilde z^*),S,h'} ,
\end{align}
and after normalisation the amplitude carried by $z^*$ is $\alpha_{z^*}(\phi') = \Delta_{z^*}(\phi')/\sqrt{E_0(\phi')^2+E_1(\phi')^2}$, which is the $\alpha_{z^*}$ of \Sim.

\begin{theorem}\label{thm:E}
Assume Hypothesis~(U), then for every threshold $\theta>0$,
\begin{equation}
\begin{split}
  \E\Bigl[\ \sum_{z^*}\alpha_{z^*}^2\ \Bigr] &\;\le\; 1 , \\
  \text{and hence}\quad
  \Pr\Bigl[\ \max_{z^*}\bigl|\alpha_{z^*}\bigr| > \theta\ \Bigr]
   &\;\le\; \theta^{-2} .
\end{split}
\end{equation}
Here $\E$ and $\Pr$ are defined over $\phi_B'\in {\{0, 1\}}^{Q_B}$.
Taking $\theta = 2^{3n/2}$ gives $2^{-3n}$, stronger than the $O(2^{-n})$ of Lemma~3 of \Sim.
\end{theorem}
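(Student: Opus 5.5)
The plan is to compute $\E\bigl[\sum_{z^*}\alpha_{z^*}^2\bigr]$ exactly on the record distribution and then compare it with the Parseval identity of Lemma~\ref{lem:D} applied term by term. The tail bound will then follow from Markov's inequality.

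\emph{Step 1: the ratio form.} Under Hypothesis~(U), with the rest of the record held fixed, the conditional law of $\phi'_B$ on $\{0,1\}^{Q_B}$ has weight $w(\phi'_B)\propto E_0(\phi')^2+E_1(\phi')^2$. The proportionality constant does not depend on the record, as noted before Section~\ref{sub:l3D}. Since $\alpha_{z^*}^2=\Delta_{z^*}^2/(E_0^2+E_1^2)$, the weight cancels the normalisation, and
\begin{equation*}
  \E\Bigl[\sum_{z^*}\alpha_{z^*}^2\Bigr]
  = \frac{\sum_{\phi'_B}\sum_{z^*}\Delta_{z^*}(\phi')^2}
         {\sum_{\phi'_B}\bigl(E_0(\phi')^2+E_1(\phi')^2\bigr)} .
\end{equation*}
Records with $E_0=E_1=0$ carry zero weight, so they can be dropped. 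It therefore suffices to show that, for each $h^*$, the sum over $\phi'_B$ of $\sum_{\msb(z^*)=h^*}\Delta_{z^*}^2$ is at most the sum over $\phi'_B$ of $E_{h^*}^2$.

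\emph{Step 2: each term is itself a Walsh transform.} For a fixed $z^*$, set
\begin{equation*}
  \mathcal N_{z^*}(\phi_B)=\bigl|\mathcal A'_{z^*,W'}\bigr|\,\ind\bigl[\phi_B\in\mathcal B'_{(z',\tilde z^*),S,h'}\bigr]\ \ge 0 .
\end{equation*}
Then $\Delta_{z^*}(\phi')=\widehat{\mathcal N_{z^*}}(\phi'_B)$, exactly as in Eq.~\eqref{eq:walsh}. The same character-sum computation as in Lemma~\ref{lem:D} gives
\begin{equation*}
  \sum_{\phi'_B}\Delta_{z^*}^2=2^{Q_B}\sum_{\phi_B}\mathcal N_{z^*}(\phi_B)^2 .
\end{equation*}
Moreover $\mathcal N_{h^*}=\sum_{z^*:\msb(z^*)=h^*}\mathcal N_{z^*}$ pointwise, because $T_M$ decomposes by the value of $z^*$. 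Since the summands are nonnegative, $\mathcal N_{h^*}^2\ge\sum_{z^*}\mathcal N_{z^*}^2$ pointwise. In fact equality holds: $\phi_B$ fixes $z_B$ and hence $\tilde z^*$ by Eq.~\eqref{eq:l2}, so the supports are disjoint, though the inequality is all we need. Combining with Eq.~\eqref{eq:parseval} gives
\begin{equation*}
  \sum_{\phi'_B}\sum_{z^*}\Delta_{z^*}^2\le\sum_{\phi'_B}E_{h^*}^2 .
\end{equation*}
Adding over $h^*\in\{0,1\}$ shows the ratio in Step~1 is at most $1$.

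\emph{Step 3: the tail.} If $\max_{z^*}|\alpha_{z^*}|>\theta$, then $\sum_{z^*}\alpha_{z^*}^2>\theta^2$. Markov's inequality then gives probability at most $\theta^{-2}$, and $\theta=2^{3n/2}$ gives $2^{-3n}$.

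The main obstacle is Step~1: justifying that the conditional law of $\phi'_B$ is exactly proportional to $E_0^2+E_1^2$ over the full cube. This is precisely where Hypothesis~(U) is used, since otherwise the admissible $\phi'_B$ form a proper subset and the Parseval sum would not run over the whole cube. I would also have to check that the post-selections of Step~6 only rescale the weights by record-independent constants. If they did not, the cancellation in the ratio would fail.
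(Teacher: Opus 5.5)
Your proposal is correct and follows the paper's proof: you write each $z^*$-term as a Walsh transform, apply Parseval over the full cube (this is where Hypothesis~(U) enters), cancel the record weight against the normalisation, and finish with Markov. The only variation is that you obtain $\sum_{z^*}\mathcal N_{z^*}^2\le\mathcal N_{h^*}^2$ from nonnegativity, where the paper gets equality because $z^*$ is a function of $\phi_B$.

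One small correction: the displayed ratio in your Step~1 should be an inequality, not an equality. A record with $E_0=E_1=0$ can still have $\sum_{z^*}\Delta_{z^*}^2>0$, since the signed terms may cancel inside $E_{h^*}$. The expectation is therefore the numerator restricted to positive-weight records, which is at most your full-cube numerator, and that is exactly the direction you need.
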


\begin{proof}
We take five steps.

\emph{Step 1.} Once the total sum is determined, $z^* = z - \sum_{i\in B}\phi_iy_i$, where the right side depends only on $\phi_B$ beyond the fixed $z$.
So $z^*$ is a function of $\phi_B$, written as $z^*(\phi_B)$.

\emph{Step 2.} Let
\begin{equation}
\begin{split}
  \mathcal N_{z^*}(\phi_B) = \#\Bigl\{\phi_A :\ &(\phi_A,\phi_B)\in T_M \\
   &\text{and }\textstyle\sum_{i\in A}\phi_iy_i= z^*\Bigr\} ,
\end{split}
\end{equation}
which adds to $\mathcal N_{h^*}(\phi_B)$ the requirement that the $A$-side sum equal this particular $z^*$.
The question is whether the extra requirement removes anything.
If $z^*\ne z^*(\phi_B)$ then no $\phi_A$ qualifies and $\mathcal N_{z^*}(\phi_B) = 0$.
If instead $z^* = z^*(\phi_B)$, then every $\phi_A$ making $(\phi_A,\phi_B)$ lie in $T_M$ satisfies $\sum_{i\in A}\phi_iy_i = z - \sum_{i\in B}\phi_iy_i = z^*$ automatically.
Therefore
\begin{align}
    \mathcal N_{z^*}(\phi_B) = \mathcal N_{h^*}(\phi_B)\cdot \ind [z^*(\phi_B)=z^*],
\end{align}
which means that
\begin{align}
  \sum_{z^*}\mathcal N_{z^*}(\phi_B) &= \mathcal N_{h^*}(\phi_B) , \\
  \sum_{z^*}\mathcal N_{z^*}(\phi_B)^2 &= \mathcal N_{h^*}(\phi_B)^2.
\end{align}

\emph{Step 3.} Equation~\eqref{eq:walsh} applies to $\mathcal N_{z^*}$, giving $\Delta_{z^*}(\phi') = \widehat{\mathcal N_{z^*}}(\phi'_B)$, and then
\begin{equation}
\begin{split}
  \sum_{\phi'_B}\sum_{z^*}\Delta_{z^*}(\phi')^2
   &= 2^{Q_B}\sum_{z^*}\sum_{\phi_B}\mathcal N_{z^*}(\phi_B)^2 \\
   &\overset{\text{Step 2}}{=}
     2^{Q_B}\sum_{\phi_B}\mathcal N_{h^*=0}(\phi_B)^2 \\
   &\qquad{}+ 2^{Q_B}\sum_{\phi_B}\mathcal N_{h^*=1}(\phi_B)^2 \\
   &= \sum_{\phi'_B}\bigl(E_0(\phi')^2+E_1(\phi')^2\bigr) ,
\end{split}
\end{equation}
where the last equality is Lemma~\ref{lem:D} again.

\emph{Step 4.} The weight of a record is proportional to $E_0(\phi')^2+E_1(\phi')^2$ and the constant cancels.
The quantity $\alpha_{z^*}$ carries $\sqrt{E_0^2+E_1^2}$ in its denominator and is defined only where the weight is positive, so the average runs over those records alone,
\begin{widetext}
\begin{equation}
  \E\Bigl[\sum_{z^*}\alpha_{z^*}^2\Bigr]
   = \frac{\displaystyle\sum_{\phi'_B\,:\,E_0^2+E_1^2>0}
       \underbrace{\bigl(E_0(\phi')^2+E_1(\phi')^2\bigr)}_{\text{weight}}
       \cdot\underbrace{\frac{\sum_{z^*}\Delta_{z^*}(\phi')^2}
         {E_0(\phi')^2+E_1(\phi')^2}}_{=\ \sum_{z^*}\alpha_{z^*}(\phi')^2}}
      {\displaystyle\sum_{\phi'_B\in\{0,1\}^{Q_B}}
        \bigl(E_0(\phi')^2+E_1(\phi')^2\bigr)} .
\end{equation}
\end{widetext}
The two factors in each numerator term cancel, leaving $\sum_{\phi'_B\,:\,E_0^2+E_1^2>0}\sum_{z^*}\Delta_{z^*}(\phi')^2$.
The identity of Step~3 runs over every $\phi'_B$, and the two sums differ by the nonnegative contribution of the zero-weight records.
Dropping that contribution only decreases the value, so
\begin{equation}
  \E\Bigl[\sum_{z^*}\alpha_{z^*}^2\Bigr]
   \;\le\; \frac{\sum_{\phi'_B}\sum_{z^*}\Delta_{z^*}(\phi')^2}
      {\sum_{\phi'_B}\bigl(E_0(\phi')^2+E_1(\phi')^2\bigr)} \;=\; 1 .
\end{equation}

\emph{Step 5.} Write $V(\phi') = \sum_{z^*}\alpha_{z^*}(\phi')^2$, which is nonnegative with $\E[V]\le1$ by Step~4.
Pointwise $\max_{z^*}\alpha_{z^*}(\phi')^2 \le V(\phi')$, so the event $\{\max_{z^*}|\alpha_{z^*}|>\theta\}$ is contained in $\{V>\theta^2\}$.
Splitting $\E[V]$ according to whether $V$ exceeds $\theta^2$, we have
\begin{equation}
  \E[V] \;\ge\; \theta^2\Pr\bigl[V>\theta^2\bigr] ,
\end{equation}
so $\Pr[V>\theta^2] \le \E[V]/\theta^2 \le \theta^{-2}$.
\end{proof}

\subsection{The corollary}\label{sub:l3Fp}

The corollary to Lemma~3 asks the same question of the aggregate
\begin{equation}\label{eq:bridge}
  \alpha_{z^*_h} \;=\; \sum_{z^*\,:\ \topp(z^*)\,=\,z^*_h}\alpha_{z^*} ,
\end{equation}
a class containing $2^{\,n-L}$ values of $z^*$.
The proof of Theorem~\ref{thm:E} used only that $z^*$ is a function of $\phi_B$, so any coarser classification carries the same argument.

\hypertarget{thm:Ep}{}%
\begin{theoremEprime}
Assume Hypothesis~(U).
Let $\zeta$ be any function of $\phi_B$, with values in an arbitrary set.
Write $\Delta_\zeta(\phi')$ for the part of $E_{h^*}(\phi')$ contributed by the terms with $\zeta(\phi_B) = \zeta$, and $\alpha_\zeta = \Delta_\zeta/\sqrt{E_0^2+E_1^2}$.
Then for every $\theta>0$,
\begin{align}
  \E\Bigl[\ \sum_\zeta\alpha_\zeta^2\ \Bigr] \le 1 , \\
  \Pr\Bigl[\ \max_\zeta|\alpha_\zeta| > \theta\ \Bigr] \le \theta^{-2} .
\end{align}
\end{theoremEprime}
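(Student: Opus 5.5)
The plan is to rerun the five steps of the proof of Theorem~\ref{thm:E} with the classifier $z^*(\phi_B)$ replaced by the arbitrary function $\zeta(\phi_B)$. We work at a fixed $h^*$ and fix the rest of the record. Under Hypothesis~(U), $\phi'_B$ ranges over the whole cube $\{0,1\}^{Q_B}$.

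First define
\begin{equation}
  \mathcal N_\zeta(\phi_B) \;=\; \mathcal N_{h^*}(\phi_B)\,\ind\bigl[\zeta(\phi_B)=\zeta\bigr] .
\end{equation}
Because $\zeta$ is a function of $\phi_B$, each $\phi_B$ falls into exactly one class. This gives
\begin{equation}
  \sum_\zeta\mathcal N_\zeta = \mathcal N_{h^*},
  \qquad
  \sum_\zeta\mathcal N_\zeta^2 = \mathcal N_{h^*}^2
\end{equation}
pointwise, since the supports are disjoint. This replaces Step~2 of Theorem~\ref{thm:E}. No property of $\zeta$ beyond measurability with respect to $\phi_B$ enters, so the value set may be arbitrary. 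A coarsening of $z^*$ such as $\topp(z^*)$ in Eq.~\eqref{eq:bridge} is one special case.

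Next, regroup Eq.~\eqref{eq:walsh} by the value of $\zeta$. The sign $(-1)^{\phi_B\cdot\phi'_B}$ depends only on $\phi_B$, so
\begin{equation}
  \Delta_\zeta(\phi') = \widehat{\mathcal N_\zeta}(\phi'_B) ,
\end{equation}
and $E_{h^*} = \sum_\zeta\Delta_\zeta$. The Parseval computation of Lemma~\ref{lem:D}, applied to each $\mathcal N_\zeta$ and then summed over $\zeta$, gives
\begin{equation}
  \sum_{\phi'_B}\sum_\zeta\Delta_\zeta^2 = 2^{Q_B}\sum_{\phi_B}\mathcal N_{h^*}(\phi_B)^2 = \sum_{\phi'_B}E_{h^*}(\phi')^2 .
\end{equation}
If $\zeta$ is meant to range over classes for both values of $h^*$, as $z^*$ did, summing over $h^*$ yields $\sum_{\phi'_B}(E_0^2+E_1^2)$. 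If only one $h^*$ is in play, we get the single term $\sum_{\phi'_B}E_{h^*}^2$. In either reading the result is at most $\sum_{\phi'_B}(E_0^2+E_1^2)$, which is all the argument needs.

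Finally, repeat Step~4 verbatim. The record weight $E_0^2+E_1^2$ cancels against the normalisation in $\alpha_\zeta$, and the zero-weight records are dropped. This gives $\E[\sum_\zeta\alpha_\zeta^2]\le1$, where the inequality now also absorbs the possible omission of the other branch. Step~5's Markov argument, with $\max_\zeta\alpha_\zeta^2\le\sum_\zeta\alpha_\zeta^2$, then yields the tail bound $\theta^{-2}$.

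The only point needing care is the bookkeeping of which branches $\zeta$ covers. I will state that the inequality, not an equality, is used, so both readings are handled at once. I expect no real obstacle.
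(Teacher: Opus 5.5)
Your proposal is correct and follows the paper's proof. You rerun the five steps of Theorem~\ref{thm:E} with $\mathcal N_\zeta=\mathcal N_{h^*}\,\ind[\zeta(\phi_B)=\zeta]$, whose disjoint supports give $\sum_\zeta\mathcal N_\zeta^2=\mathcal N_{h^*}^2$, and then apply Parseval, the cancellation of the record weight, and Markov. Your remark that only the inequality $\sum_{\phi'_B}\sum_\zeta\Delta_\zeta^2\le\sum_{\phi'_B}(E_0^2+E_1^2)$ is needed, which covers both the one-branch and the two-branch reading of $\zeta$, makes explicit a point the paper's proof leaves implicit.
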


\begin{proof}
Repeat the five steps of Theorem~\ref{thm:E} with $z^*$ replaced by $\zeta$.
Step~1 is now the hypothesis on $\zeta$.
Step~2 uses only that hypothesis and gives $\mathcal N_\zeta(\phi_B) = \mathcal N_{h^*}(\phi_B)\ind[\zeta(\phi_B)=\zeta]$, which has at most one nonzero entry for fixed $\phi_B$.
Steps~3 to 5 never refer to the meaning of $\zeta$.
\end{proof}

Taking $\zeta(\phi_B) = z^*(\phi_B)$ recovers Theorem~\ref{thm:E}.
Taking $\zeta(\phi_B) = \topp(z^*(\phi_B))$ gives
\begin{align}
  \E\Bigl[\sum_{z^*_h}\alpha_{z^*_h}^2\Bigr] &\le 1 , \\
  \Pr\Bigl[\max_{z^*_h}\bigl|\alpha_{z^*_h}\bigr| > n^{3/2}\Bigr] &\le n^{-3} ,
\end{align}
well beyond the $O(1/n)$ the corollary asks for.

Theorem~\ref{thm:E} and \hyperlink{thm:Ep}{Theorem~E$'$} give the second half of Lemma~3 and its corollary under Hypothesis~(U), at every threshold and for either value of $h^*$.
Both of these go beyond the published statement.
The first half is in Appendix~\ref{app:E} and nothing above or below draws on it.

\section{Lemma 4}\label{sec:l4}

\begin{quote}
\textbf{Lemma 4} \Sim.
\emph{For any $z^*$, let $z^*_h$ be the most significant $\log n$ bits of $z^*$, and $z^*_l$ be the remaining bits.
Then for any two values $z^*_{h,0}$ and $z^*_{h,1}$ of $z^*_h$ that differ only in the most significant bit $h^*$, and for $c\ge12$, the amplitudes $\alpha_{z_{h,0}}$ and $\alpha_{z_{h,1}}$ (defined as in the corollary to Lemma~3) differ by an expected multiplicative factor $1\pm O(n^{-(((c-1)/2)-1)})$ with probability at least $1-1/n$.}
\end{quote}

The two amplitudes being compared are built out of the $A$-side counts.
Substituting $\alpha_{z^*} = \lambda_4|\mathcal A'_{z^*,W'}|C_{(z',\tilde z^*),S,h'}$ into the aggregate of Eq.~\eqref{eq:bridge},
\begin{equation}\label{eq:l4bridge}
  \alpha_{z^*_h} \;=\; \lambda_4\sum_{z^*_l\in\Z_\nu}
     \bigl|\mathcal A'_{(z^*_h,z^*_l),W'}\bigr|\;C_{(z',\tilde z^*),S,h'} .
\end{equation}
Two values of $z^*_h$ differing only in the top bit share every bit of $\tilde z^*$ that the $B$-side factor sees, so $C$ is the same function of $z^*_l$ for both, which is Lemma~2.
The comparison therefore sits entirely in the unsigned counts $|\mathcal A'|$, and the rest of this section is about them.

The skeleton in \Sim\ is a pair of balls-in-bins models, one at the level of $z^*_h$ and one at the level of $z^*_l$.
A mean and a variance are read off each model and Chebyshev is applied to those two numbers.
The balls are subset sums and are pairwise independent without being independent, which is all a variance needs, so both numbers the sketch reads off come out right.
Sections~\ref{sub:l4G} to \ref{sub:l4I} replace the two models by exact computations and recover the exponents and failure probabilities the paper displays.
The assumption that $g_a$ carries no faulty samples is removed along the way.

\medskip
The $A$ side factorises over groups, and $g_a$ sits differently in the record from the other $a-1$ groups
\begin{equation}
  \phi_A = \bigl(\underbrace{\phi_{g_1},\dots,\phi_{g_{a-1}}}_{
      \text{top blocks pinned by } W'},\ \underbrace{\phi_{g_a}}_{
      \text{top block free}}\bigr) .
\end{equation}
Three constraints define $\mathcal A'_{z^*,W'}$, namely $\topp(r_j) = W'_j$ for $j<a$, $l_{s^*}=0$, and $z^*(\phi_A) = z^*$.
Split the third one along $z^* = z^*_h\nu + z^*_l$.
For the high half, the top $L$ bits $s_a$ of the sum over the distinguished group $g_a$ determine $z^*_h$ by
\begin{equation}
  z^*_h = \sum_{j<a}W'_j + s_a(\phi_{g_a}) \bmod 2^L .
\end{equation}
Every $W'_j$ here is a measured constant, so fixing $z^*_h$ is the same as fixing $s_a$ and confines $\phi_{g_a}$ to the block $\topp(r_{g_a}) = s_a$.
For the low half, writing $r_j = s_j\nu+\rho_j$ for each group $j$ as in Section~\ref{sub:l4I}
\begin{equation}
  z^*_l \equiv \rho_{g_a}(\phi_{g_a}) + \sum_{j<a}\rho_j \pmod\nu .
\end{equation}
Enumerating $\phi_{g_a}$ first and counting the remaining groups afterwards turns the count into a double sum
\begin{equation}\label{eq:Asplit}
  \bigl|\mathcal A'_{(z^*_h,z^*_l),W'}\bigr|
   \;=\; \sum_{\phi_{g_a}\,:\ \topp(r_{g_a})=s_a}\!\!\!\! X_{z^*_l-\rho_{g_a}} ,
\end{equation}
where
\begin{equation}\label{eq:Xrho}
  X_\rho \;=\; \#\Bigl\{(\phi_{g_j})_{j<a} :\ \topp(r_j)=W'_j\ \ \forall j<a, \textstyle\sum_{j<a}\rho_j \equiv \rho \Bigr\}
\end{equation}
counts the configurations of the pinned groups whose low parts add up to $\rho$.
That bin count is taken up again in Section~\ref{sub:l4I}.

\subsection{An exact covariance}\label{sub:l4G}

We start from the covariance computation using Eq.~\eqref{eq:uu}.

\begin{lemma}\label{lem:F}
For any group $j$ and any $r,r'\in\Z_N$,
\begin{equation}
  \Cov\bigl(n_j[r],\,n_j[r']\bigr)
   = \frac{2^m-1}{N}\Bigl(\delta_{rr'} - \frac1N\Bigr) .
\end{equation}
Consequently the top-block count $X_\sigma = \sum_{r\,:\,\topp(r)=\sigma}n_j[r]$ has variance exactly $(2^m-1)2^{-L}(1-2^{-L})$.
\end{lemma}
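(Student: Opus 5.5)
The plan is to read the covariance off Eq.~\eqref{eq:uu} by first computing the mean $\E\, n_j[r]$ and then subtracting the product of means. Write $n_j[r] = \sum_{S\subseteq g_j}\ind[\sum_{i\in S}y_i\equiv r]$. The empty subset contributes $\ind[r=0]$ deterministically, and each of the $2^m-1$ nonempty subsets hits $r$ with probability exactly $1/N$, because a nonempty sum of independent uniform elements of $\Z_N$ is uniform. This gives
\begin{equation*}
  \E\, n_j[r] = \ind[r{=}0] + \frac{2^m-1}{N} = \ind[r{=}0] + \beta .
\end{equation*}
Multiplying the means gives
\begin{equation*}
  \ind[r{=}r'{=}0] + \beta\bigl(\ind[r{=}0]+\ind[r'{=}0]\bigr) + \beta^2 .
\end{equation*}
Subtracting this from Eq.~\eqref{eq:uu} removes the first two terms exactly, so
\begin{equation*}
  \Cov\bigl(n_j[r],n_j[r']\bigr) = \beta\,\delta_{rr'} + \frac{4^m-3\cdot2^m+2}{N^2} - \beta^2 .
\end{equation*}

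It remains to identify the constant. We have $\beta^2 = (4^m - 2\cdot 2^m + 1)/N^2$, so the constant equals $(-2^m+1)/N^2 = -\beta/N$. That yields $\beta(\delta_{rr'} - 1/N)$, which is the stated formula. Eq.~\eqref{eq:uu} itself comes from Appendix~\ref{app:B} and is taken as given. Its content that matters here is pairwise independence: for two distinct nonempty subsets $S\ne S'$, the pair $(\sum_S y_i,\sum_{S'}y_i)$ is uniform on $\Z_N^2$. This holds because some index lies in one subset but not the other, so conditioning on the rest leaves a uniform shift.

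For the top-block count, sum the covariance over $r,r'$ in the block $\{r:\topp(r)=\sigma\}$, which has $\nu = N/2^L$ elements:
\begin{equation*}
  \Var X_\sigma = \beta\Bigl(\nu - \frac{\nu^2}{N}\Bigr) = \frac{2^m-1}{N}\,\nu\,(1-2^{-L}) = (2^m-1)\,2^{-L}(1-2^{-L}).
\end{equation*}
The main point needing care is that the deterministic empty-subset term cancels completely in the covariance. This cancellation is why the $\ind[r=0]$ pieces drop out and the result is translation invariant.
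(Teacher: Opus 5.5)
Your proposal is correct and follows the paper's proof essentially step for step. You compute $\E\,n_j[r]=\ind[r{=}0]+\beta$, subtract the product of means from Eq.~\eqref{eq:uu} to leave $\beta\,\delta_{rr'}-(2^m-1)/N^2$, and sum over the $\nu$ residues of one top block, with your justification of Eq.~\eqref{eq:uu} via joint uniformity of distinct nonempty subset sums matching Appendix~\ref{app:B}.
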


\begin{proof}
The first moment behind Eq.~\eqref{eq:uu} is $\E[n_j[r]] = \ind[r{=}0] + \beta$, with $\beta = (2^m-1)/N$ as in Section~\ref{sub:l1thm}, so
\begin{equation}
\begin{split}
  \E\bigl[n_j[r]\bigr]\E\bigl[n_j[r']\bigr]
   &= \ind[r{=}0]\ind[r'{=}0] \\
   &\quad + \beta\bigl(\ind[r{=}0]+\ind[r'{=}0]\bigr) + \beta^2 .
\end{split}
\end{equation}
Subtracting this from Eq.~\eqref{eq:uu} leads to
\begin{equation}
  \beta\,\delta_{rr'}
   + \frac{(4^m-3\cdot2^m+2)-(2^m-1)^2}{N^2}
   = \beta\,\delta_{rr'} - \frac{2^m-1}{N^2}.
\end{equation}
Summing over $r$ and $r'$ inside one block $\topp(r)=\topp(r')=\sigma$ gives
\begin{equation}
\begin{split}
  \Var(X_\sigma) &= \frac{2^m-1}{N}\Bigl(\nu-\frac{\nu^2}{N}\Bigr)
   = (2^m-1)\,\frac\nu N\Bigl(1-\frac\nu N\Bigr) \\
   &= (2^m-1)\,2^{-L}\bigl(1-2^{-L}\bigr) . \qedhere
\end{split}
\end{equation}
\end{proof}

\subsection{Concentration of the block counts}\label{sub:l4H}

\begin{theorem}\label{thm:G}
Let $\kappa>0$ and let $g_j$ be any group.
With probability at least $1-2^L/\kappa^2$ over the samples of that group, every top block satisfies
\begin{equation}\label{eq:thmH}
  \bigl|X_\sigma - \E X_\sigma\bigr|
   \;\le\; \kappa\sqrt{(2^m-1)\,2^{-L}\bigl(1-2^{-L}\bigr)} .
\end{equation}
For $\sigma\ne0$ the mean is $\E X_\sigma = (2^m-1)2^{-L}$, and the bound takes the relative form
\begin{equation}\label{eq:thmHrel}
  \frac{\bigl|X_\sigma-\E X_\sigma\bigr|}{\E X_\sigma}
   \;\le\; \kappa\sqrt{\frac{2^L\bigl(1-2^{-L}\bigr)}{2^m-1}} .
\end{equation}
At the parameters $2^m = n^c$ and $2^L = n$ of \Sim, taking $\kappa = n$ gives a relative deviation at most $n^{-((c-1)/2-1)}$ with failure probability at most $1/n$.
Those are the exponent and the failure probability displayed in Lemma~4.
\end{theorem}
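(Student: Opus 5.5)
The plan is Chebyshev on each block, followed by a union bound over the $2^L$ blocks. Lemma~\ref{lem:F} already gives the exact variance $\Var(X_\sigma) = (2^m-1)2^{-L}(1-2^{-L})$ for every $\sigma$, with the randomness being the weights $y_i$ of the single group $g_j$, uniform on $\Z_N^m$ as in Eq.~\eqref{eq:Yunif}. For a fixed block, Chebyshev's inequality at $\kappa$ standard deviations gives failure probability at most $\kappa^{-2}$. There are $2^L$ top blocks, so a union bound gives a failure probability of at most $2^L/\kappa^2$ that some block violates Eq.~\eqref{eq:thmH}. No independence between blocks is needed, which is why this route is preferred over anything multiplicative.

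Next comes the mean. From the first moment $\E[n_j[r]] = \ind[r{=}0]+\beta$ used in the proof of Lemma~\ref{lem:F}, summing over the $\nu$ residues of a block gives $\E X_\sigma = \ind[\sigma{=}0] + \nu\beta = \ind[\sigma{=}0] + (2^m-1)2^{-L}$, since $\nu/N = 2^{-L}$. The block $\sigma=0$ carries the extra $1$ from the empty subset. For $\sigma\ne0$ the mean is exactly $(2^m-1)2^{-L}$. Dividing the bound of Eq.~\eqref{eq:thmH} by it gives
\begin{equation*}
\kappa\sqrt{(2^m-1)2^{-L}(1-2^{-L})}\,\big/\,\bigl((2^m-1)2^{-L}\bigr) = \kappa\sqrt{2^L(1-2^{-L})/(2^m-1)},
\end{equation*}
which is Eq.~\eqref{eq:thmHrel}.

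Finally, substitute the parameters $2^m=n^c$, $2^L=n$, $\kappa=n$. The failure probability is $n/n^2=1/n$. The relative deviation is at most $n\sqrt{n/(n^c-1)}$, and we need to show this is at most $n^{-((c-1)/2-1)} = n^{1-(c-1)/2}$. The only delicate point is the $-1$ in $n^c-1$. The factor $(1-2^{-L}) = 1-1/n$ absorbs it, because
\begin{equation*}
\frac{n(1-1/n)}{n^c-1} = \frac{n-1}{n^c-1} \le \frac{n}{n^c} = n^{1-c}
\end{equation*}
whenever $n\ge1$. This inequality is equivalent to $n^c - n^{c-1}\le n^c -1$, which holds since $n^{c-1}\ge1$. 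The relative bound is therefore at most $n\cdot n^{(1-c)/2} = n^{1-(c-1)/2}$ exactly, with no hidden constant.

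I expect no real obstacle. The step needing care is the bookkeeping for the $\sigma=0$ block, whose mean differs by $1$, together with the clean absorption of the $-1$ above. If a uniform statement over all blocks is wanted, $\sigma=0$ keeps the absolute form of Eq.~\eqref{eq:thmH}, which is why the relative form is claimed only for $\sigma\ne0$.
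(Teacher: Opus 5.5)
Your proposal is correct and follows the paper's proof essentially step for step: the block mean from $\E[n_j[r]]=\ind[r{=}0]+\beta$, the exact variance from Lemma~\ref{lem:F}, Chebyshev at $\kappa$ standard deviations for each block, a union bound over the $2^L$ blocks, and division by the mean for $\sigma\ne0$. The one small difference is at the end, where your inequality $(n-1)/(n^c-1)\le n^{1-c}$ gives the relative bound $n^{1-(c-1)/2}$ exactly, whereas the paper writes the relative standard deviation as $n^{(1-c)/2}(1+O(n^{-1}))$ and leaves implicit that this correction factor is at most one.
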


\begin{proof}
We take five steps.

\emph{Step 1.} The first moment behind Eq.~\eqref{eq:uu} is $\E[n_j[r]] = \ind[r{=}0]+\beta$, and a top block holds $\nu = N/2^L$ residues, so
\begin{align}
  \E X_\sigma = \sum_{r\,:\,\topp(r)=\sigma}\bigl(\ind[r{=}0]+\beta\bigr) = \ind[\sigma{=}0] + \nu\beta , \label{eq:blockmean} \\
  \nu\beta = (2^m-1)2^{-L} .
\end{align}
The term $\ind[\sigma{=}0]$ comes from the atom at $r=0$, since $\topp(0)=0$ places the empty subset in the block $\sigma = 0$.
The $2^m-1$ nonempty subsets divide evenly among the $2^L$ blocks, and the block $\sigma=0$ holds one more.

\emph{Step 2.} Lemma~\ref{lem:F} gives $\Var(X_\sigma) = (2^m-1)2^{-L}(1-2^{-L})$ for every $\sigma$.

\emph{Step 3.} Fix one $\sigma$.
The standard deviation $\mathrm{sd}(X_\sigma) = \sqrt{\Var(X_\sigma)}$ does not depend on $\sigma$, so one threshold serves all blocks, and Chebyshev gives
\begin{equation}
  \Pr\Bigl[\bigl|X_\sigma-\E X_\sigma\bigr|
     > \kappa\,\mathrm{sd}(X_\sigma)\Bigr]
   \;\le\; \frac{\Var(X_\sigma)}{\kappa^2\Var(X_\sigma)} \;=\; \frac1{\kappa^2} .
\end{equation}
Only finiteness of the variance is used, and nothing about the shape of the distribution of $X_\sigma$.

\emph{Step 4.} Let $B_\sigma$ be the event that block $\sigma$ leaves the range, for $\sigma = 0,1,\dots,2^L-1$.
We need to exclude the union.
Pointwise $\ind[\bigcup_\sigma B_\sigma] \le \sum_\sigma\ind[B_\sigma]$, since the left side taking the value $1$ forces at least one term on the right to do the same and the terms are nonnegative.
Taking expectations and using Step~3,
\begin{equation}
  \Pr\Bigl[\bigcup_\sigma B_\sigma\Bigr]
   \;\le\; \sum_{\sigma=0}^{2^L-1}\Pr[B_\sigma] \;\le\; \frac{2^L}{\kappa^2} ,
\end{equation}
and the complementary event is Eq.~\eqref{eq:thmH}.

\emph{Step 5.} For $\sigma\ne0$, Step~1 gives $\E X_\sigma = (2^m-1)2^{-L}$, and dividing the bound of Step~4 by it produces the relative form.
Substituting $2^m = n^c$ and $2^L = n$, the relative standard deviation is $\sqrt{n(1-1/n)/(n^c-1)} = n^{(1-c)/2}(1+O(n^{-1}))$.
Taking $\kappa = n$ turns this into $n^{\,1-(c-1)/2} = n^{-((c-1)/2-1)}$ with failure probability $2^L/\kappa^2 = n/n^2$.
\end{proof}

\subsection{Faulty samples}\label{sub:l4faulty}

Suppose $g_a$ carries $f_a$ faulty samples.
A faulty sample fixes its bit $\phi_i$ at a definite value and takes no part in the superposition, so the group has only $m-f_a$ free bits.
The count $n_{g_a}[r]$ then ranges over $2^{\,m-f_a}$ configurations, and the frozen bits contribute one fixed shift, which moves the whole distribution without changing its shape.
So Lemma~\ref{lem:F}, Theorem~\ref{thm:G} and Lemma~\ref{lem:H} below all hold with $2^m$ replaced by $2^{\,m-f_a}$.

Since $f_a\sim\mathrm{Bin}(m,p_f)$ with $p_f = 1/(c'\log n)$ and $m = c\log n$, the mean $\E f_a = mp_f = c/c'$ is a constant.
On average
\begin{equation}
\begin{split}
  \E\bigl[2^{\,m-f_a}\bigr] &= 2^m\,\E\bigl[2^{-f_a}\bigr] \\
  &= 2^m\Bigl(1-\frac{p_f}{2}\Bigr)^m \longrightarrow 2^m e^{-c/(2c')} ,
\end{split}
\end{equation}
indicating a loss of one constant factor.
Passing from a ratio of averages to an average of ratios costs one application of Cauchy--Schwarz and stays at the same order.

Theorem~\ref{thm:G} needs the high probability version.
From $\binom mk \le m^k/k!$,
\begin{equation}
  \Pr\bigl[f_a\ge k\bigr] \;\le\; \binom mk p_f^{\,k}
   \;\le\; \frac{(mp_f)^k}{k!} \;=\; \frac{(c/c')^k}{k!} ,
\end{equation}
and $k = \Theta(\log n/\log\log n)$ makes the right side smaller than $1/n$.
With probability $1-1/n$ we therefore have $2^{\,m-f_a} = n^{\,c}2^{-f_a} = n^{\,c-o(1)}$, so the relative deviation of Theorem~\ref{thm:G} picks up a factor $n^{o(1)}$ and the exponent in $n^{-((c-1)/2-1)}$ is unchanged.

So the assumption that $g_a$ carries no faulty samples can be dropped, and no constant probability has to be paid for it.
Both failure probabilities are of order $1/n$ and may be added, and the same applies to the other $A$ groups.

\subsection{The second model under the conditioning}\label{sub:l4I}
\begin{table}

\renewcommand{\arraystretch}{1.15}
\centering
\caption{The two models. They part company in the first row, where the second
one sees the record.}
\label{tab:twomodels}
\begin{tabularx}{\linewidth}{@{}lLL@{}}
\toprule
 & First, at $z^*_h$ & Second, at $z^*_l$ \\
\midrule
Balls & all $2^m$ strings $\phi_{g_a}$ of $g_a$
   & the $\phi_A$ \textbf{compatible with $W'$} \\
One bin & a top block of $r_{g_a}$ & a value of $z^*_l$ \\
Bins & $2^L = n$ & $\nu = 2^{\,n-L}$ \\
\bottomrule
\end{tabularx}
\end{table}

The two balls-in-bins arguments are not the same construction, as Table~\ref{tab:twomodels} shows.
The first model lets $\phi_{g_a}$ run over all $2^m$ values with the other groups absent, while the second fixes the value $\phi_{g_a}$ first and lets the remaining groups run.
The $A$ groups therefore fall into two classes in the second model.
The $a-1$ groups with $j<a$ are pinned by $W'$ to $\topp(r_j) = W'_j$ while the $m$ bits inside each of them stay free, and all of the balls come from there.
The group $g_a$ has its bits fixed and contributes one definite number.
The record $W' = (s_1,\dots,s_{a-1},l_{s^*})$ holds top blocks only for $j<a$, and $s_a$ is not among them.

Write
\begin{align}
  X_{s_j} &= \#\bigl\{\phi_{j}\in\{0,1\}^m : \topp(r_j) = s_j\bigr\} , \\
  P &= \prod_{j<a}X_{s_j} ,
\end{align}
so that $X_{s_j}$ is the block count of Theorem~\ref{thm:G} taken at the pinned block, and $P$ is the total number of $\phi_A$ compatible with $W'$.
Write $r_j = s_j\nu + \rho_j$ with $0\le\rho_j<\nu$.
Since $s_j\nu\equiv0\pmod\nu$, the low $n-L$ bits of $z^*$ come only from the low parts,
\begin{equation}
  z^*_l \;\equiv\; \underbrace{\rho_{g_a}}_{\text{a definite number}}
     + \sum_{j<a}\rho_j \pmod\nu .
\end{equation}
The term from $g_a$ has to be separated out.
Its effect is to shift every bin cyclically by one fixed amount and is independent of the other groups.

The bin count $X_\rho$ of Eq.~\eqref{eq:Xrho}, which runs over the pinned groups $j<a$ alone, is then a convolution of their low parts on $\Z_\nu$.
Restricting $n_j$ to the pinned block gives the window slice
\begin{align}
  u_j(\rho) \;:=\; n_j\bigl[s_j\nu+\rho\bigr] , 
\end{align}
representing the same numbers as $n_j$, taken on one block and reindexed.
Fixing $(\rho_1,\dots,\rho_{a-1})$, the number of compatible $\phi_A$ is $\prod_j n_j[s_j\nu+\rho_j] = \prod_j u_j(\rho_j)$.
Summing over all choices with $\sum_j\rho_j\equiv\rho$ gives
\begin{equation}
\begin{split}
  X_\rho &= \sum_{\rho_1+\cdots+\rho_{a-1}\equiv\rho\ (\mathrm{mod}\ \nu)} \prod_j u_j(\rho_j)  \\
  &=\; \bigl(u_1*\cdots*u_{a-1}\bigr)(\rho) ,
\end{split}
\end{equation}
a cyclic convolution on $\Z_\nu$.

\begin{lemma}\label{lem:H}
Suppose at least one group has $s_j\ne0$, and write $\mu = \E[P]/\nu$.
Split each bin count into
\begin{equation}
  X_\rho \;=\; \frac P\nu + Y_\rho ,
\end{equation}
the first piece being the share of each bin when there are $P$ balls in all, and the second how far bin $\rho$ departs from its share.
Then $\E X_\rho = \mu$, $\sum_\rho Y_\rho = 0$, the variables $Y_\rho$ and $P$ are uncorrelated, and
\begin{equation}\label{eq:covY}
  \Cov\bigl(Y_\rho,Y_{\rho'}\bigr) = \mu\,\delta_{\rho\rho'} - \frac\mu\nu .
\end{equation}
This is the covariance of $\nu\mu$ balls thrown uniformly into $\nu$ bins, and the proof of Lemma~4 assumes exactly that.
The conditioning on $W'$ enters only through the ball count $P$, and since $P = \prod_j X_{s_j}$ has independent factors,
\begin{equation}\label{eq:varP}
  \frac{\Var(P)}{\bigl(\E P\bigr)^2}
   = \prod_j\Bigl(1+\frac{\Var(X_{s_j})}{\bigl(\E X_{s_j}\bigr)^2}\Bigr) - 1
   \;\approx\; a\,\frac{2^L}{2^m} ,
\end{equation}
so the relative fluctuation of $P$ is about $\sqrt{a\,2^{\,L-m}} = n^{(2-c)/2}(\log n)^{-1/2}$.
\end{lemma}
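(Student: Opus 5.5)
The plan is to diagonalise the cyclic convolution with the discrete Fourier transform on $\Z_\nu$ and then use independence across groups, exactly as Lemma~\ref{lem:B} used independence on $\Z_N$.

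\emph{Fourier setup.} Write $\hat u_j(k)=\sum_{\rho\in\Z_\nu}u_j(\rho)e^{-2\pi ik\rho/\nu}$. Then $\hat X(k)=\prod_{j<a}\hat u_j(k)$. Its zero mode is $\hat X(0)=\prod_j X_{s_j}=P$. By Fourier inversion,
\begin{equation*}
  X_\rho=\frac P\nu+\frac1\nu\sum_{k\ne0}\hat X(k)\,e^{2\pi ik\rho/\nu},
\end{equation*}
so $Y_\rho$ is exactly the nonzero-mode part. This gives $\sum_\rho Y_\rho=0$ at once.

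\emph{Second moments of one window slice.} Different groups use disjoint weights, so every mixed moment factorises. For a block $s_j\ne0$ the atom at $r=0$ is absent. Eq.~\eqref{eq:uu} then reduces to
\begin{equation*}
  \E[u_j(\rho)u_j(\rho')]=\beta\delta_{\rho\rho'}+\gamma,\qquad \gamma=(4^m-3\cdot2^m+2)/N^2,
\end{equation*}
and the first moment is $\E u_j(\rho)=\beta$. Transforming gives
\begin{align*}
  \E\hat u_j(k)&=\nu\beta\,\delta_{k0},\\
  \E\bigl[\hat u_j(k)\overline{\hat u_j(k')}\bigr]&=\nu\beta\,\delta_{kk'}+\gamma\nu^2\delta_{k0}\delta_{k'0}.
\end{align*}
So a single group with $s_j\ne0$ already forces three facts:
\begin{itemize}[leftmargin=1.6em,itemsep=0.2em,topsep=0.3em]
\item $\E\hat X(k)=0$ for $k\ne0$, hence $\E X_\rho=\E P/\nu=\mu$;
\item $\E[\hat X(k)\,\overline{P}]=0$ for $k\ne0$, since that group's factor is off-diagonal; hence $Y_\rho$ and $P$ are uncorrelated;
\item $\E[\hat X(k)\overline{\hat X(k')}]=0$ whenever $k\ne k'$ and both are nonzero.
\end{itemize}
The diagonal then gives
\begin{equation*}
  \Cov(Y_\rho,Y_{\rho'})=\nu^{-2}\sum_{k\ne0}\E|\hat X(k)|^2\,e^{2\pi ik(\rho-\rho')/\nu}.
\end{equation*}
If every pinned block is nonzero, then $\E|\hat X(k)|^2=\prod_j\nu\beta=\E P=\nu\mu$ for all $k\ne0$. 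The sum over $k\ne0$ equals $\nu\delta_{\rho\rho'}-1$, which yields Eq.~\eqref{eq:covY}.

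\emph{The main obstacle.} I expect the obstacle to be the groups with $s_j=0$. There the atom $\ind[\rho=0]$ adds terms of the form $1+2\beta\,\mathrm{Re}\,\hat\ind_{B_0}$, in the style of Eq.~\eqref{eq:ET}, so $\E|\hat u_j(k)|^2$ is no longer flat in $k$. The first thing I would try is to check whether the flatness survives anyway. Because a pinned group contributes only $\rho$ inside its own window, the atom is just the single point $\rho=0$. That shifts the constant $\E|\hat u_j(k)|^2$ by $1+2\beta$, uniformly in $k\ne0$. It also shifts $\E X_{s_j}^2$ correspondingly, so that $\E|\hat X(k)|^2$ can be matched against $\E P$ group by group. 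If the match is exact, the stated hypothesis suffices. If it is not, I would prove Eq.~\eqref{eq:covY} exactly when all $s_j\ne0$, and bound the correction from the $s_j=0$ groups by a relative $O(2^{-m})$, which is harmless downstream.

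\emph{The ball count.} For Eq.~\eqref{eq:varP}, independence of the factors gives
\begin{equation*}
  \E P^2/(\E P)^2=\prod_j \E X_{s_j}^2/(\E X_{s_j})^2.
\end{equation*}
Subtracting one gives the product identity. Lemma~\ref{lem:F} and Step~1 of Theorem~\ref{thm:G} give each ratio $\Var(X_{s_j})/(\E X_{s_j})^2\approx 2^L/2^m$. Since $a\,2^{L-m}\to0$, expanding the product to first order gives $\approx a\,2^{L-m}$.
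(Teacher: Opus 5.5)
Your route is the paper's: transform on $\Z_\nu$, factorise over independent groups, and use one slice with $s_j\ne0$ to kill the first moment and the off-diagonal second moments. The product formula for $\Var(P)$ is also the same. The gap is the case you flag yourself: the diagonal moment $\E|\hat u_j(k)|^2$ at $k\ne0$ for a pinned group with $s_j=0$. Your computation of it is wrong. On $\Z_\nu$ the window is the whole group, not a proper block as in Eq.~\eqref{eq:ET}. Each cross-atom term therefore transforms to $\beta\sum_{\rho\in\Z_\nu}e^{\pm2\pi ik\rho/\nu}=\beta\nu\,\delta_{k0}$, and together they give $2\beta\nu\,\delta_{k0}$, which vanishes at $k\ne0$. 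The constant term $\gamma\nu^2\delta_{k0}$ vanishes there as well. Only the atom ($1$) and the diagonal ($\beta\nu$) survive, so $\E|\hat u_j(k)|^2=1+\beta\nu$ is flat in $k\ne0$, and the shift is $1$, not $1+2\beta$. You also compare against the wrong quantity. Since $\E P=\prod_j\E X_{s_j}$ by independence, what must match is the first moment $\E X_{s_j}=\ind[s_j{=}0]+\beta\nu$ of Eq.~\eqref{eq:blockmean}, not $\E X_{s_j}^2$. With the correct computation, $\E|\hat u_j(k)|^2=\E X_{s_j}$ for every $j$ and every $k\ne0$. Hence $\E|\hat X(k)|^2=\prod_j\E X_{s_j}=\E P=\nu\mu$ exactly, for any mix of zero and nonzero pinned blocks.

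As written, your proposal never settles its own fork. Worse, your shift $1+2\beta$ would send you to the fallback, which gives Eq.~\eqref{eq:covY} exactly only when every $s_j\ne0$ and approximately otherwise. That is strictly weaker than the lemma, whose hypothesis is only that at least one $s_j\ne0$. The single nonzero block is needed exactly where you used it: for $\E\hat X(k)=0$ and for the vanishing of the $k\ne k'$ terms. For those, an $s_j=0$ slice would not help, since its atom leaves $\E\hat u_j(k)=1$ at $k\ne0$ and a nonzero $\E[\hat u_j(k)\overline{\hat u_j(k')}]$ at $k\ne k'$. It is not needed for the diagonal. The rest of your argument matches the paper: the uncorrelatedness of $Y_\rho$ and $P$, the sum $\sum_{k\ne0}e^{2\pi ik(\rho-\rho')/\nu}=\nu\delta_{\rho\rho'}-1$, and Eq.~\eqref{eq:varP}.
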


A bin count fluctuates in two ways.
The ball count itself changes, which makes every bin rise and fall together.
The balls also scatter differently once the count is fixed, which makes the bins trade against one another.
Written separately, Eq.~\eqref{eq:covY} is the scattering and Eq.~\eqref{eq:varP} is the change in the count.

\begin{proof}
Bin counts are a convolution of the slices.
One Fourier transform turns that into a pointwise product, and different groups use independent weights, so the whole computation factors group by group.
Write $\hat f(k) = \sum_\rho f(\rho)e^{-2\pi ik\rho/\nu}$ for the transform on $\Z_\nu$, so that $\hat X(k) = \prod_j\hat u_j(k)$.
The zero frequency is the ball count, $\hat X(0) = \sum_\rho X_\rho = P$.
The proof consists of four steps.

\emph{Step 1, a slice with $s_j\ne0$.} Let $s_{j_0}\ne0$.
Of the four terms in Eq.~\eqref{eq:uu}, the atom and the two cross-atoms sit at $r=0$, and $\topp(0)=0$ places them in the block $s_j = 0$ alone.
Every $r$ in the block $s_{j_0}$ is nonzero, so only two terms survive,
\begin{align}
  \E\,u_{j_0}(\rho) &= \frac{2^m-1}{N} , \\
  \E\bigl[u_{j_0}(\rho)\,u_{j_0}(\rho')\bigr] &= \frac{2^m-1}{N}\,\delta_{\rho\rho'} + \frac{4^m-3\cdot2^m+2}{N^2} ,
\end{align}
a constant together with a diagonal term.
Transforming each leads to
\begin{align}
  \E\,\hat u_{j_0}(k) &= \beta\,\nu\,\delta_{k0},\label{eq:slice1}\\
  \E\bigl[\hat u_{j_0}(k)\overline{\hat u_{j_0}(k')}\bigr]
   &= \beta\,\nu\,\delta_{kk'}
     + \frac{4^m-3\cdot2^m+2}{N^2}\,\nu^2\,\delta_{k0}\delta_{k'0} .\label{eq:slice2}
\end{align}
Equation~\eqref{eq:slice1} vanishes at $k\ne0$, and Eq.~\eqref{eq:slice2} vanishes at $k\ne k'$, since then the two indices cannot both be $0$.

\emph{Step 2, multiplying over groups.} Distinct groups use disjoint weights and all weights are independent, so the expectation of a product factors into the product of the expectations,
\begin{align}
  \E\,\hat X(k) &= \prod_j\E\,\hat u_j(k) , \\
  \E\bigl[\hat X(k)\overline{\hat X(k')}\bigr] &= \prod_j\E\bigl[\hat u_j(k)\overline{\hat u_j(k')}\bigr] .
\end{align}
There are four cases in $(k,k')$.

\emph{(a) First moment.} $\E\,\hat u_{j_0}(k) = 0$ by Eq.~\eqref{eq:slice1}, so one factor of the product vanishes and $\E\,\hat X(k) = 0$.
Substituting into the inverse transform $X_\rho = \frac1\nu\sum_k\hat X(k)e^{2\pi ik\rho/\nu}$ and taking expectations, only $k=0$ survives as
\begin{equation}
  \E X_\rho = \frac1\nu\,\E\,\hat X(0) = \frac{\E P}{\nu} = \mu
\end{equation}
that is independent of $\rho$.

\emph{(b) Second moment.} By Eq.~\eqref{eq:slice2} one factor is again zero for $k\neq k'$, so $\E[\hat X(k)\overline{\hat X(k')}] = 0$.
As for $k = k'\ne0$, we compute $\E|\hat u_j(k)|^2$ group by group from
\begin{equation}
  \E\bigl|\hat u_j(k)\bigr|^2 = \sum_{\rho,\rho'\in\Z_\nu}
     \E\bigl[u_j(\rho)u_j(\rho')\bigr]\,e^{-2\pi ik(\rho-\rho')/\nu} ,
\end{equation}
taking the four terms of Eq.~\eqref{eq:uu} in turn.
At $k\ne0$ only the atom and the diagonal survive.
A group with $s_j = 0$ carries both and a group with $s_j\ne0$ carries only the diagonal, so in either case
\begin{equation}
  \E\bigl|\hat u_j(k)\bigr|^2 = (2^m-1)2^{-L} + \ind\bigl[s_j{=}0\bigr]
   = \E X_{s_j} \qquad(k\ne0) ,
\end{equation}
the right side being the block mean of Eq.~\eqref{eq:blockmean}.
Multiplying over $j$ and using independence to turn the product back into an expectation,
\begin{equation}
\begin{split}
  \E\bigl|\hat X(k)\bigr|^2 &= \prod_j\E X_{s_j} \\
  &= \E\Bigl[\prod_j X_{s_j}\Bigr] = \E P = \nu\mu \qquad(k\ne0) .
\end{split}
\end{equation}
If $k = k' = 0$, then $\hat X(0) = P$ and the second moment is $\E[P^2]$, used only in Step~4.

These four cases also show that $Y_\rho$ and $P$ are uncorrelated.
We have $P = \hat X(0)$ while $Y_\rho$ is a linear combination of the $\hat X(k)$ with $k\ne0$, and term by term
\begin{equation}
\begin{split}
  \Cov\bigl(\hat X(k),\hat X(0)\bigr) &= \E\bigl[\hat X(k)\overline{\hat X(0)}\bigr] - \E\,\hat X(k)\;\overline{\E\,\hat X(0)} \\
  &= 0 \qquad(k\ne0) .
\end{split}
\end{equation}
The variance of a single bin is therefore the sum of the two pieces with no cross term,
\begin{equation}
  \Var\bigl(X_\rho\bigr) = \frac{\Var(P)}{\nu^2} + \mu\Bigl(1-\frac1\nu\Bigr) ,
\end{equation}
the first piece from the ball count moving and the second from the balls scattering.

\emph{Step 3, back to the bins.} Splitting the inverse transform by frequency,
\begin{equation}
\begin{split}
  X_\rho &= \frac1\nu\sum_k\hat X(k)\,e^{2\pi ik\rho/\nu} \\
  &= \underbrace{\frac{\hat X(0)}\nu}_{=\;P/\nu} + \underbrace{\frac1\nu\sum_{k\ne0}\hat X(k)\,e^{2\pi ik\rho/\nu}}_{ =\;Y_\rho} .
\end{split}
\end{equation}
Since $\E Y_\rho = 0$, expanding both factors and using $\overline{\hat X(k')} = \hat X(-k')$ gives the covariance of two bins $\rho$ and $\rho'$ as
\begin{align}
  \Cov\bigl(Y_\rho,Y_{\rho'}\bigr)&=\E\bigl[Y_\rho Y_{\rho'}\bigr]\\
    &= \frac1{\nu^2}\sum_{k\ne0}\sum_{k'\ne0} \E\bigl[\hat X(k)\overline{\hat X(k')}\bigr]\,e^{2\pi i(k\rho-k'\rho')/\nu}\\
    &= \frac{\nu\mu}{\nu^2}\sum_{k\ne0}e^{2\pi ik(\rho-\rho')/\nu} \\
    &= \frac\mu\nu\bigl(\nu\,\delta_{\rho\rho'}-1\bigr) = \mu\,\delta_{\rho\rho'} - \frac\mu\nu ,
\end{align}
where only the terms with $k = k'$ survive.

\emph{Step 4, the ball count.} The factors of $P = \prod_j X_{s_j}$ are independent, so $\E[P^2] = \prod_j\E[X_{s_j}^2]$ and $(\E P)^2 = \prod_j(\E X_{s_j})^2$, giving
\begin{equation}
  \frac{\Var(P)}{\bigl(\E P\bigr)^2}
   = \prod_j\frac{\E\bigl[X_{s_j}^2\bigr]}{\bigl(\E X_{s_j}\bigr)^2} - 1
   = \prod_j\Bigl[1+\frac{\Var(X_{s_j})}{\bigl(\E X_{s_j}\bigr)^2}\Bigr] - 1 .
\end{equation}
By Lemma~\ref{lem:F} the numerator is $\Var(X_{s_j}) = (2^m-1)2^{-L}(1-2^{-L})$, and $\E X_{s_j}\ge(2^m-1)2^{-L}$, so each bracket carries a term at most
\begin{equation}
  \frac{(2^m-1)2^{-L}(1-2^{-L})}{\bigl((2^m-1)2^{-L}\bigr)^2}
   = \frac{2^L-1}{2^m-1} \;\approx\; 2^{\,L-m} .
\end{equation}
With $a = n/\log n$ factors and $(1+x)^a-1\approx ax$, we finally obtain
\begin{equation}
  \frac{\Var(P)}{\bigl(\E P\bigr)^2} \;\approx\; a\,2^{\,L-m}
   = \frac{n}{\log n}\cdot n^{\,1-c} = \frac{n^{\,2-c}}{\log n} ,
\end{equation}
whose square root is $n^{(2-c)/2}(\log n)^{-1/2}$.
\end{proof}

\subsection{Fitting the two levels together}\label{sub:l4assemble}

Lemma~\ref{lem:H} splits the bin count into $X_\rho = P/\nu + Y_\rho$ and shows that the ball count $P$ fluctuates, an effect a fixed-count picture leaves out.
This subsection shows that the extra term does not affect the conclusion of Lemma~4, because $P$ takes one value across both branches and cancels from the comparison.

Following Eq.~\eqref{eq:Asplit}, we first sort the $\phi_A$ by the string $\phi_{g_a}$ of bits in $g_a$, and keep the convention that $z^*_h$ is determined by $s_a = \topp(r_{g_a})$.
Fixing $\phi_{g_a}$ contributes a definite $\rho_{g_a}$ to the low part, so this class holds $X_{\rho-\rho_{g_a}}$ configurations with $z^*_l = \rho$.
Substituting Eq.~\eqref{eq:Asplit} into Eq.~\eqref{eq:bridge} then gives
\begin{equation}\label{eq:rhosum}
  \alpha_{z^*_h} = \lambda_4\sum_{\rho\in\Z_\nu}
    \Bigl(\sum_{\phi_{g_a}\,:\ \topp(r_{g_a})=s_a}X_{\rho-\rho_{g_a}}\Bigr)C_\rho .
\end{equation}

Now put $X_{\rho-\rho_{g_a}} = P/\nu + Y_{\rho-\rho_{g_a}}$ into the sum.
The first piece is a constant and depends on neither $\rho$ nor $\phi_{g_a}$, so the two sums separate as
\begin{equation}\label{eq:split}
\begin{split}
  \alpha_{z^*_h} &= 
   \underbrace{\lambda_4\,\frac P\nu\Bigl(\sum_\rho C_\rho\Bigr)X_{s_a}}_{
      \text{main term}} \\
   &+ \underbrace{\lambda_4\!\!\!\sum_{\phi_{g_a}\,:\ \topp(r_{g_a})=s_a}\ \sum_\rho
      Y_{\rho-\rho_{g_a}}C_\rho}_{=:\ R_{h^*}} ,
\end{split}
\end{equation}
where $X_{s_a}$ counts the strings $\phi_{g_a}$ consistent with $s_a$, and $R_{h^*}$ names the error term, one number per branch.

Everything in the main term except $X_{s_a}$ takes the same value on the two branches.
The coefficient $\lambda_4 = (E_0^2+E_1^2)^{-1/2}$ normalises the two branches together and is one number per record.
The ball count $P = \prod_{j<a}X_{W'_j}$ is fixed by the measured $W'$ and by the weights of the groups $j<a$, and neither $h^*$ nor $z^*_h$ occurs in it.
Writing $J := \lambda_4(P/\nu)\sum_\rho C_\rho$, the main term becomes $J\,X_{s_a}$ with the same $J$ on both branches.

For the main term, the quantities $s_a$ and $z^*_h$ differ by the fixed shift $s^* = \sum_{j<a}W'_j + s_a$, so flipping the top bit of $z^*_h$ flips the top bit of $s_a$ and Theorem~\ref{thm:G} applies directly as
\begin{equation}\label{eq:blockdiff}
  \frac{\bigl|X_{s_{a,0}}-X_{s_{a,1}}\bigr|}{\E X_{s_a}}
   \;\le\; 2\,n^{-\left(\frac{c-1}2-1\right)} ,
\end{equation}
where each block departs from the mean by at most $n^{-(\frac{c-1}2-1)}$, with failure probability still $1/n$.
If one of the two blocks is $s_a = 0$, its mean carries the extra empty subset of Eq.~\eqref{eq:blockmean} in a relative amount $2^L/(2^m-1)\approx n^{1-c}$.
For $c\ge12$ that lies far below $n^{-((c-1)/2-1)}$ and is absorbed.
The ball count $P$ is a common factor of the main term and factors out of the difference, so the size of its fluctuation is immaterial.

The remaining quantity to control is $R_{h^*}$, which involves only $Y$.
Lemma~\ref{lem:H} asks that at least one pinned block be nonzero.
All $a-1$ of them vanish with probability $2^{-L(a-1)} = 2^{-(n-\log n)}$, which joins the other failure probabilities and is negligible against them.

The lemma gives $\Var(Y_\rho) = \mu(1-1/\nu)$ with $\mu = \E[P]/\nu$, and $\E[P]$ follows from Eq.~\eqref{eq:blockmean},
\begin{equation}\label{eq:EP}
  \E[P] = \prod_{j<a}\Bigl((2^m-1)2^{-L} + \ind[s_j{=}0]\Bigr)
   \;\approx\; 2^{\,(a-1)(m-L)} .
\end{equation}
With $am = cn$ and $aL = n$, and with $m$ replaced by $m(1-1/(c'\log n))$ to carry the faulty samples of Section~\ref{sub:l4faulty},
\begin{align}
  \mu &= \frac{\E[P]}\nu = 2^{\,cn(1-1/(c'\log n))-2n-(c-2)\log n} , \\
  \sigma &= \sqrt\mu .
\end{align}
Chebyshev bounds the probability that one bin violates $|Y_\rho|\le2^{\,n}\sigma$ by $2^{-2n}$.
Following Section~\ref{sub:prob}, the union bound runs over the $\nu = 2^{\,n-L}$ bins and over all $2^{\,L(a-1)} = 2^{\,n-\log n}$ values of the label vector $W'$.
The exceptional event that survives has probability
\begin{equation}\label{eq:errfail}
  \underbrace{2^{\,n-\log n}}_{W'}\cdot\underbrace{2^{\,n-\log n}}_{\text{bins}}\cdot\, 2^{-2n} \;=\; 2^{-2\log n} \;=\; n^{-2} ,
\end{equation}
which joins the other failure probabilities of order $1/n$.
Outside it every bin sits within a relative $2^{\,n}\sigma/\mu$ of its share.
Taken at a single $\rho$, the summand of Eq.~\eqref{eq:rhosum} is the amplitude $\alpha_{z^*}$ with $z^*_l = \rho$.
The matching summand of $R_{h^*}$ is at most $2^{\,n}\sigma/\mu$ times that amplitude, and Theorem~\ref{thm:E} at $\theta = 2^{\,3n/2}$ bounds it.
Summing over the $\nu$ values of $\rho$,
\begin{equation}\label{eq:Rmax}
  \bigl|R_{h^*}\bigr| \;\le\; R_{\max} \;:=\; \underbrace{\frac{2^{\,n}\sigma}{\mu}}_{\text{one bin}}
    \cdot\underbrace{\nu}_{\text{summands}}
    \cdot\underbrace{2^{\,3n/2}}_{\text{Lemma 3}} .
\end{equation}
Taking logarithms at $c = 12$,
\begin{equation}\label{eq:Rlog}
  \log R_{\max} \;\le\; -\frac{3n}2 \;+\; \frac{6n}{c'\log n} \;+\; 4\log n .
\end{equation}
The middle term is $o(n)$ for every constant faulty rate $c'$, so $R_{\max}\le2^{-3n/2+o(n)}$, comfortably below the $2^{-n}$ the argument needs.

\subsection{Path to the algorithm}
Everything needed for the \emph{difference} of the two aggregated amplitudes is now in place, and the difference is the weaker of the two statements at issue.
Equation~\eqref{eq:split} reads $\alpha_{z^*_h} = J\,X_{s_a} + R_{h^*}$ on each branch, and the prefactor $J$ is the same number for the two members of a pair.
Subtracting the two copies therefore leaves it outside the bracket,
\begin{equation}\label{eq:additive}
  \alpha_{z^*_{h,0}} - \alpha_{z^*_{h,1}}
   = J\bigl(X_{s_{a,0}} - X_{s_{a,1}}\bigr) + \bigl(R_0 - R_1\bigr) .
\end{equation}
The two terms are handled separately.
Equation~\eqref{eq:blockdiff} bounds the count difference by $2n^{-((c-1)/2-1)}\E X_{s_a}$, and each error term obeys $|R_{h^*}|\le R_{\max}$, so
\begin{equation}\label{eq:additivebound}
  \bigl|\alpha_{z^*_{h,0}} - \alpha_{z^*_{h,1}}\bigr|
   \;\le\; 2\,n^{-\left(\frac{c-1}2-1\right)}\,\bigl|J\bigr|\,\E X_{s_a} \;+\; 2R_{\max} .
\end{equation}

Meanwhile, Theorem~\ref{thm:G} in relative form gives $|X_{s_{a,0}}-\E X_{s_a}| \le n^{-((c-1)/2-1)}\E X_{s_a}$ on the same record, leading to
\begin{equation}\label{eq:XvsE}
  \E X_{s_a} \;\le\; \frac{X_{s_{a,0}}}{1-n^{-((c-1)/2-1)}} .
\end{equation}
Multiplying by $|J|$ and reading Eq.~\eqref{eq:split} backwards ($J\,X_{s_{a,0}} = \alpha_{z^*_{h,0}} - R_0$), we have
\begin{equation}\label{eq:JEX}
\begin{split}
  \bigl|J\bigr|\,\E X_{s_a}
   &\;\le\;\frac{\bigl|J\,X_{s_{a,0}}\bigr|}{1-n^{-((c-1)/2-1)}} \\[2pt]
   &\;=\;\frac{\bigl|\alpha_{z^*_{h,0}}-R_0\bigr|}{1-n^{-((c-1)/2-1)}} \\[2pt]
   &\;\le\;\bigl(1+o(1)\bigr)\Bigl(\bigl|\alpha_{z^*_{h,0}}\bigr|+R_{\max}\Bigr),
\end{split}
\end{equation}
the last line by the triangle inequality with $|R_0|\le R_{\max}$, together with $\bigl(1-n^{-((c-1)/2-1)}\bigr)^{-1} = 1+o(1)$ once $c\ge12$.
The corollary to Lemma~3 bounds $|\alpha_{z^*_{h,0}}|$, so every quantity on the right of Eq.~\eqref{eq:additivebound} is already bounded.
Nothing along the way needs $J$ to be large, or even to have a definite sign.

Lemma~4 is stated as a bound on the \emph{ratio} in the form $1\pm O(n^{-((c-1)/2-1)})$, which is strictly stronger than Eq.~\eqref{eq:additivebound}.
To be more specific,
\begin{equation}
  \frac{\alpha_{z^*_{h,0}}}{\alpha_{z^*_{h,1}}}
   = \frac{X_{s_{a,0}} + R_0/J}{X_{s_{a,1}} + R_1/J} ,
\end{equation}
which is close to $1$ only when $|J|\,X_{s_a}$ dominates $R_{\max}$.
A pair whose prefactor satisfies $|J|\,X_{s_a}\lesssim R_{\max}$ has a ratio nowhere near $1$.
Such a pair can occur, since $J$ carries the signed sum $\sum_\rho C_\rho$ over the $\nu = 2^{\,n-L}$ values of $z^*_l$ and no argument keeps such a sum far from zero on every record.

This section therefore establishes Eq.~\eqref{eq:additivebound} and not the ratio.
The rest of the section shows that the difference is all the closing argument ever consumes, so nothing downstream is weakened.
The index $z^*_h$ carries $L = \log n$ bits and therefore takes $n$ values, which form $n/2$ pairs differing only in the top bit $h^*$.
Grouping Eq.~\eqref{eq:diff} by pairs, the quantity to bound is a sum of $n/2$ \emph{differences},
\begin{equation}\label{eq:pairsum}
  \bigl|\hat E_0-\hat E_1\bigr| \;\le\; \sum_{\text{pairs}}
     \bigl|\alpha_{z^*_{h,0}} - \alpha_{z^*_{h,1}}\bigr| ,
\end{equation}
which is the form Eq.~\eqref{eq:additivebound} was stated in.
Both sides carry the normalisation $\lambda_4$, since $\alpha_{z^*_h}$ does.

By \hyperlink{thm:Ep}{Theorem~E$'$} at $\theta = n^{3/2}$, every aggregated amplitude obeys $|\alpha_{z^*_h}|\le n^{3/2}$ except with probability $n^{-3}$, so Eqs.~\eqref{eq:additivebound} and \eqref{eq:JEX} bound each summand by $2n^{-((c-1)/2-1)}(n^{3/2}+R_{\max})(1+o(1)) + 2R_{\max}$.
Since $R_{\max} \le 2^{-3n/2+o(n)}$ is exponentially small, the $R_{\max}$ terms are negligible against everything else and
\begin{equation}\label{eq:E0E1}
\begin{split}
  \bigl|\hat E_0-\hat E_1\bigr| \; &\le\; \frac n2\cdot 2\,n^{3/2} \cdot O\bigl(n^{-((c-1)/2-1)}\bigr)  \\
  &=\; O\bigl(n^{\,7/2-(c-1)/2}\bigr) \;\overset{c\,\ge\,12}{\le}\; O(n^{-2}) ,
\end{split}
\end{equation}
with failure probability $O(1/n)$, and the exponent is consistent with the one obtained from the multiplicative form, so the weaker statement loses nothing.

The two branch amplitudes are now pinned.
Since $\hat E_0^2+\hat E_1^2 = 1$, at least one of $|\hat E_0|$ and $|\hat E_1|$ is at least $1/\sqrt2$.
Equation~\eqref{eq:E0E1} makes the two differ by $O(n^{-2})$, so both are within $O(n^{-2})$ of $1/\sqrt2$ and carry the same sign.
The normalisation cancels from the ratio, so
\begin{equation}
  q \;=\; \frac{\bigl|E_0-E_1\bigr|}{\bigl|E_0+E_1\bigr|}
    \;=\; \frac{\bigl|\hat E_0-\hat E_1\bigr|}{\bigl|\hat E_0+\hat E_1\bigr|}
    \;=\; O\bigl(n^{-2}\bigr) ,
\end{equation}
and Eq.~\eqref{eq:p} gives $p = 1-O(n^{-4})$, which exceeds $1/2$ by a constant.
Polynomially many repetitions and a majority vote then return $d_n$, and the recursion on the remaining bits of $d$ finishes the algorithm.

\section{Numerical checks}\label{sec:num}

\begin{figure*}[t]
\centering
\includegraphics[width=\linewidth]{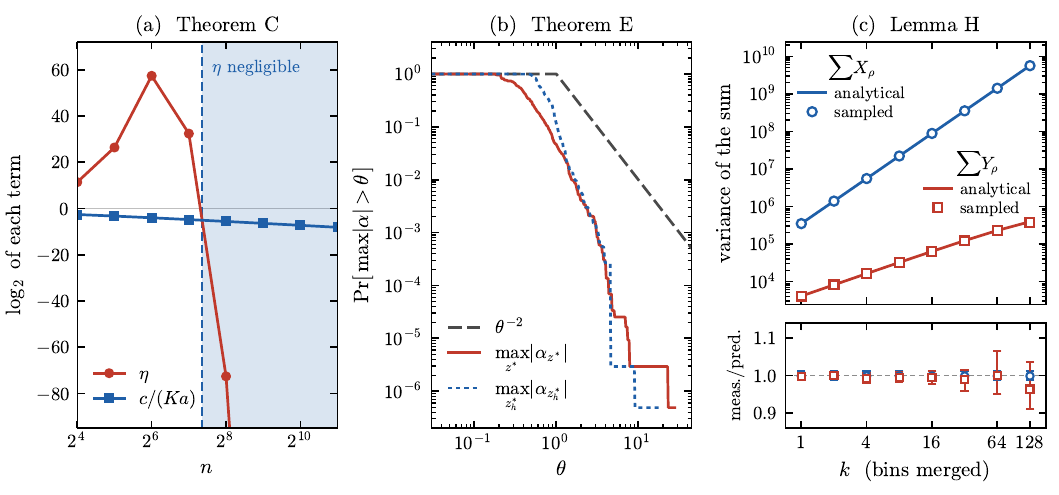}
\caption{One numerical check for each of the three results.
(a) The two terms that bound $\Var[\mathcal Z]/\E[\mathcal Z]^2$ in
Eq.~\eqref{eq:varZ}, at $c = 12$ and $K = 1.5c$.
The correlation term $\eta$ is computed exactly and drops below the Poisson
term at $n\approx164$.
(b) Measured tail of the amplitude against the bound of Theorem~\ref{thm:E},
taken over the full cube of $2^{10}$ measurement outcomes of a toy instance
with $n = 6$ and $Q = 20$.
(c) Variance of $k$ merged bins in the second balls-in-bins model, at $n = 12$,
$m = 10$, $L = 3$, $\nu = 512$ and three pinned groups, from $2\times10^5$
samples.
The upper curve carries the ball-count term and the lower one does not.
The lines are computed from Lemma~\ref{lem:H}, with $\mu$ and $\Var(P)$ taken
from Eq.~\eqref{eq:blockmean} and Eq.~\eqref{eq:varP}, while the points are
measured from the samples.
The strip below gives the measured value over the computed one, with $95\%$
bootstrap intervals.}
\label{fig:num}
\end{figure*}

The parameters of \Sim\ put a direct simulation of the algorithm out of reach, since $c\ge12$ makes $Q = Kn^{c+1}$ exceed $10^{16}$ already at $n = 16$.
The three results proved above are statements about classical counting quantities, and those can be evaluated in full at small parameters.
Figure~\ref{fig:num} collects one check for each.

Panel~(a) plots the two terms of Eq.~\eqref{eq:varZ} against $n$ at the parameters of \Sim.
The correlation coefficient $\eta$ is obtained in closed form from Eq.~\eqref{eq:ET} in exact rational arithmetic, using the averages of Eqs.~\eqref{eq:fejer}, \eqref{eq:extraavg} and \eqref{eq:extraavg2}.
The value agrees with a direct sum over $k$ for $n\le16$.
The curve rises while $2^m$ exceeds $N$ and falls once $2^m$ drops below it, crossing the Poisson term $c/(Ka)$ at $n\approx164$.
Beyond that point $\eta$ contributes nothing and the bound of Theorem~\ref{thm:C} is carried by $1/\E[\mathcal Z]$ alone, which leaves $\Pr[\mathcal Z\ge a]$ at least $0.81$ for $n = 256$ and at least $0.96$ for $n = 2048$.

Panel~(b) takes a toy instance small enough to enumerate, with $n = 6$ and $Q = 20$ sample bits cut into $G = 4$ groups of $m = 5$, of which $a = 2$ sit on the $A$ side.
Every one of the $2^{10}$ strings $\phi'_B$ is summed over, which imposes Hypothesis~(U) by construction.
The measured tail runs a factor of $20$ below $\theta^{-2}$ at $\theta = 1$ and a factor of $10^3$ below it at $\theta = 16$, so the bound of Theorem~\ref{thm:E} is comfortable as well as valid.

Panel~(c) tests Lemma~\ref{lem:H} by merging $k$ adjacent bins.
Since $Y_\rho$ and $P$ are uncorrelated, Eq.~\eqref{eq:covY} predicts $\Var(\sum_{\rho<k}Y_\rho) = \mu k(1-k/\nu)$, while the same sum of the bin counts themselves carries the further term $k^2\Var(P)/\nu^2$.
Both are computed from the lemma alone, with $\mu$ taken from Eq.~\eqref{eq:blockmean} and $\Var(P)$ from Eq.~\eqref{eq:varP}.
The measured values come from $2\times10^5$ sampled runs and agree with both.
The gap between the two curves is the term a fixed ball count leaves out, and it exceeds the scattering term by orders of magnitude.
Therefore, the ball count governs the bin variance completely, and Section~\ref{sub:l4assemble} is where it cancels between the two branches.

None of these checks bears on Hypothesis~(U), since panel~(b) imposes it by construction and the other two never meet it.
The agreement reported here therefore says nothing about the one gap recorded in Section~\ref{sec:rests}.

\section{Conclusions and Discussions}\label{sec:rests}

In this paper we have restated the three sketched lemmas of \Sim in a form that admits a single reading and have proved each of them, as summarized in Table~\ref{tab:final}.
Only Lemma~4 comes out weaker than the original claim, with a bound on the difference of the two branch amplitudes in place of their ratio, and Section~\ref{sub:l4assemble} shows that the closing argument consumes only the difference.

\begin{table*}[t]

\renewcommand{\arraystretch}{1.15}
\centering
\caption{The four lemmas in the forms proved here.}
\label{tab:final}
\begin{tabularx}{\linewidth}{@{}lL@{}}
\toprule
In \Sim & Status \\
\midrule
Lemma~1 & Holds, with probability tending to one
  (Theorem~\ref{thm:C}), stronger than the constant claimed \\
Lemma~2 & Exact as published, with a two-line proof
  (Section~\ref{sec:setup}) \\
Lemma~3 and Corollary & The amplitude bound holds at every threshold
  (Theorem~\ref{thm:E}, \hyperlink{thm:Ep}{E$'$}) and needs no well-behavedness
  hypothesis, so the first half of the lemma never enters and is recorded only
  in Appendix~\ref{app:E} \\
Lemma~4 & Both counting estimates hold, the first with the constant \Sim\
  displays (Theorem~\ref{thm:G}), the second together with the term a fixed ball
  count leaves out (Lemma~\ref{lem:H}), and the assumption that $g_a$ carries no
  faulty samples can be dropped (Section~\ref{sub:l4faulty}).
  The comparison of the two branches is proved in the weaker form of
  Eq.~\eqref{eq:additivebound}, which is the form the
  closing argument uses.\\
\bottomrule
\end{tabularx}
\end{table*}

One condition survives all of this.
Hypothesis~(U) asks that the $A/B$ partition be fixed independently of $\phi'$, so that $\phi'_B$ ranges over the whole cube $\{0,1\}^{Q_B}$ with the rest of the record held fixed.
However, the algorithm itself does not supply it.
It requires $A$ to consist of the first $a$ all-zero groups, which makes the partition a function of $\phi'$.
The hypothesis enters at a single step in Lemma~\ref{lem:D},
\begin{equation}\label{eq:parsevalU}
  \sum_{\phi'_B\in\{0,1\}^{Q_B}}E_{h^*}(\phi')^2
   = 2^{Q_B}\sum_{\phi_B}\mathcal N_{h^*}(\phi_B)^2 ,
\end{equation}
where the character sum over the whole cube collapses to the diagonal $\phi_B = \psi_B$ and leaves the factor $2^{Q_B}$.

Narrowing that range is not a perturbation of the identity.
A group that the rule forces away from the all-zero block contributes
\begin{equation}\label{eq:punctured}
  \sum_{\phi'_j\ne0}(-1)^{(\phi_j\oplus\psi_j)\cdot\phi'_j}
   \;=\; 2^m\delta_{\phi_j\psi_j} - 1
\end{equation}
in place of $2^m\delta_{\phi_j\psi_j}$, so Eq.~\eqref{eq:parsevalU} acquires one term for every subset of the forced groups and the off-diagonal ones no longer cancel.
Controlling them would need the autocorrelation of $\mathcal N_{h^*}$, which is the one object Lemma~\ref{lem:D} is built to avoid.
Nor does it help that the discarded strings are few, because Eq.~\eqref{eq:parsevalU} holds only after the whole cube has been summed.
Its left side collects squares of signed sums and its right side squares of counts, and the two are unrelated at a fixed $\phi'_B$.

The two demands on $A$ pull against each other, and that is why a change of rule does not remove the point.
The groups of $A$ have to be all-zero, since only then do they carry no sign and leave the $A$-side factor an unsigned count that both branches of the final interference share.
Selecting them therefore means looking at $\phi'$, and that look costs the freedom Hypothesis~(U) asks for.
Fixing the index set before Step~4 would grant the hypothesis and give up the all-zero property.
Therefore, establishing these four lemmas does not by itself establish the correctness of the algorithm.

\appendix
\section{The second moment of a group's subset-sum count}\label{app:B}

This appendix derives Eq.~\eqref{eq:uu} and Eq.~\eqref{eq:ET}, together with the two averages of the Fej\'er kernel quoted in Eq.~\eqref{eq:fejer}.

\subsection{The four classes of pairs}

Write $\Sigma(\phi) = \sum_{i\in g_j}\phi_iy_i \bmod N$, so that $n_j[r] = \sum_{\phi\in\{0,1\}^m}\ind[\Sigma(\phi)=r]$.
Multiplying two of these gives a double sum over pairs of subsets,
\begin{equation}
  \E\bigl[n_j[r]\,n_j[r']\bigr]
   = \sum_{\phi,\psi}\Pr_y\bigl[\Sigma(\phi)=r,\ \Sigma(\psi)=r'\bigr] .
\end{equation}
For a single subset, $\Pr_y[\Sigma(\phi)=r]$ depends only on whether $\phi$ is empty.
The empty subset gives $\Sigma\equiv0$ with certainty, and any nonempty subset gives $\Sigma$ uniform on $\Z_N$, since a sum of independent uniform values is uniform.
Hence
\begin{align}
  \E\bigl[n_j[r]\bigr] = \underbrace{\ind[r=0]}_{\phi=0} + \underbrace{\beta}_{\phi\ne0}, \qquad \beta = \frac{2^m-1}{N} .
\end{align}
For a pair, the same distinction applies to each member, and the four resulting classes are collected in Table~\ref{tab:classes}.
In the last case, $\phi$ and $\psi$ are distinct and neither is empty, so the two sums are jointly uniform on $\Z_N^2$.
Adding the four contributions gives Eq.~\eqref{eq:uu}.

\begin{table*}[t]

\renewcommand{\arraystretch}{1.15}
\centering
\caption{The four classes of subset pairs and what each contributes to
Eq.~\eqref{eq:uu}.}
\label{tab:classes}
\begin{tabularx}{\linewidth}{@{}lllLL@{}}
\toprule
Class & Condition & Pairs & $\Pr[\Sigma_\phi{=}r,\Sigma_\psi{=}r']$
  & Contribution \\
\midrule
atom & $\phi=\psi=0$ & $1$ & $\ind[r{=}0]\ind[r'{=}0]$
  & $\ind[r{=}r'{=}0]$ \\
cross-atom & exactly one empty & $2(2^m-1)$
  & $\tfrac1N\ind[r{=}0]$ or $\tfrac1N\ind[r'{=}0]$
  & $\beta\bigl(\ind[r{=}0]+\ind[r'{=}0]\bigr)$ \\
diagonal & $\phi=\psi\ne0$ & $2^m-1$ & $\tfrac1N\delta_{rr'}$
  & $\beta\,\delta_{rr'}$ \\
constant & $\phi\ne\psi$, neither empty & $(2^m-1)(2^m-2)$
  & $\dfrac1{N^2}$ & $\dfrac{4^m-3\cdot2^m+2}{N^2}$ \\
\bottomrule
\end{tabularx}
\end{table*}

\subsection{Carrying the four classes through \texorpdfstring{$T(k)$}{T(k)}}

Substitute Table~\ref{tab:classes} into $T_j(k) = \sum_s\sum_{r,r'\in s}n_j[r]n_j[r']\,\omega^{-k(r-r')}$, where the inner sum runs over the $r$ with the top $L$ bits fixed and the low $n-L$ bits free.

The diagonal term contributes $\beta\sum_s\sum_{r\in s}1 = \beta N = 2^m-1$, and the atom contributes $1$, the two together making $2^m$.

For the constant term, $\Z_N$ is cut into $2^L$ blocks of length $\nu$, and the bottom one is $B_0 = \{0,1,\dots,\nu-1\}$.
Its indicator has transform
\begin{equation}
  \hat\ind_{B_0}(k) = \sum_{r=0}^{\nu-1}\omega^{-kr}
   = \begin{cases}\nu, & k=0,\\[3pt]
       \dfrac{1-\omega^{-k\nu}}{1-\omega^{-k}}, & k\ne0,\end{cases}
\end{equation}
and taking the squared modulus with $|1-e^{-i\theta}|^2 = 4\sin^2(\theta/2)$ and $\nu/N = 2^{-L}$
\begin{equation}
  D_\nu(k) := \bigl|\hat\ind_{B_0}(k)\bigr|^2
   = \frac{\sin^2(\pi k\nu/N)}{\sin^2(\pi k/N)}
   = \frac{\sin^2(\pi k/2^L)}{\sin^2(\pi k/N)} ,
\end{equation}
the unnormalised Fej\'er kernel of order $\nu$.
Every block is a translate of $B_0$ and translation contributes only a phase, so $\sum_s|\sum_{r\in s}\omega^{-kr}|^2 = 2^LD_\nu(k)$, and the constant term contributes $\beta_2D_\nu(k)$.

For the cross-atom terms, one of the two indices is pinned to $0$ and the other runs over $B_0$.
The two cases give complex conjugate expressions, so together they contribute $2\beta\,\mathrm{Re}\,\hat\ind_{B_0}(k) = \beta_1\,\mathrm{Re}\,\hat\ind_{B_0}(k)$.
Collecting the four gives Eq.~\eqref{eq:ET}.

\subsection{Averages over the frequency line}

Parseval with $f = g = \ind_{B_0}$ gives the first
\begin{equation}
  \bigl\langle D_\nu\bigr\rangle_k
   = \frac1N\sum_k\bigl|\hat\ind_{B_0}(k)\bigr|^2
   = \sum_r\ind_{B_0}[r]^2 = \nu .
\end{equation}
For the second, we use $\widehat{f\star f}(k) = |\hat f(k)|^2$ from the proof of Lemma~\ref{lem:B} to give
\begin{equation}
  \frac1N\sum_k D_\nu(k)\,\omega^{k\delta}
   = \bigl(\ind_{B_0}\star\ind_{B_0}\bigr)[\delta] =: \Lambda[\delta] ,
\end{equation}
and since $\ind_{B_0}$ takes values in $\{0,1\}$,
\begin{equation}
\begin{split}
  \Lambda[\delta] &= \#\bigl\{r : r\in B_0 \text{ and } r-\delta\in B_0\bigr\} \\
  &= \bigl|B_0\cap(B_0+\delta)\bigr| = (\nu-|\delta|)_+ ,
\end{split}
\end{equation}
the overlap of two intervals of length $\nu$ offset by $\delta$, with $(x)_+ = \max(x,0)$.
Applying Parseval once more to $\Lambda$ gives
\begin{equation}
\begin{split}
  \bigl\langle D_\nu^2\bigr\rangle_k &= \frac1N\sum_k\bigl|\hat\Lambda(k)\bigr|^2 = \sum_\delta \Lambda[\delta]^2 \\
  &= \underbrace{\nu^2}_{\delta=0} + \underbrace{2\sum_{j=1}^{\nu-1}j^2}_{\delta=\pm1,\dots,\pm(\nu-1)} = \frac{\nu\bigl(2\nu^2+1\bigr)}{3} \;\le\; \nu^3 ,
\end{split}
\end{equation}
which is Eq.~\eqref{eq:fejer}.

Three further averages are used by the numerical check of Section~\ref{sec:num}, and the same two tools deliver them.
Writing $f = \ind_{B_0}$ and reading the inverse transform at $\delta = 0$ gives $\langle\hat f\rangle_k = f[0] = 1$, so $\langle\mathrm{Re}\,\hat f\rangle_k = 1$.
Since $\hat f(k)^2 = \widehat{f\star f}(k)$ up to the reflection $r\mapsto-r$, the same reading gives $\langle\hat f^2\rangle_k = \#\{r\in B_0 : -r\in B_0\} = 1$, where the only such $r$ is $0$ because $\nu\le N/2$.
Expanding $(\mathrm{Re}\,\hat f)^2 = \tfrac14(\hat f^2 + 2|\hat f|^2 + \overline{\hat f}^2)$ and using $\langle|\hat f|^2\rangle_k = \nu$, we have
\begin{equation}\label{eq:extraavg}
  \bigl\langle(\mathrm{Re}\,\hat\ind_{B_0})^2\bigr\rangle_k = \frac{1+2\nu+1}4 = \frac{\nu+1}2 .
\end{equation}
For the last one, $\hat f^2\overline{\hat f}$ at $\delta = 0$ counts the triples of $B_0$ with $r_1+r_2 = r_3$, so
\begin{equation}\label{eq:extraavg2}
\begin{split}
  \bigl\langle D_\nu\,\mathrm{Re}\,\hat\ind_{B_0}\bigr\rangle_k
   &= \#\bigl\{(r_1,r_2)\in B_0^2 : r_1+r_2\le\nu-1\bigr\} \\
   &= \frac{\nu(\nu+1)}2 .
\end{split}
\end{equation}

\section{The first half of Lemma 3}\label{app:E}

This appendix records the other half of the statement quoted at the start of Section~\ref{sec:l3}.
Its role there is to make the second half non-vacuous, and the proof given in Section~\ref{sub:l3F} does not need that service.

The signed count of Section~\ref{sec:setup} is $E_{h^*} = t^+_M-t^-_M$, so being well-behaved for $h^*$ says $\bigl|E_{h^*}\bigr| \ge \varepsilon\sqrt{t_{M,h^*}}$.
The dependence of $t_M$ on $h^*$ is written explicitly, since $T_M$ is by definition the set consistent with the record and a given $h^*$, and the two branches carry their own counts.
Failing to be well-behaved is the same as carrying little weight.
The predicate of Definition~1 contains $\Omega(\cdot)$, which names a class of functions and not a number, so the clause becomes a definition only once a constant is fixed.
That constant matters, because the failure probability scales as its square.
The threshold is therefore written as $\varepsilon\sqrt{t_{M,h^*}}$ and the conclusion as $\varepsilon^2$.

One consequence of Lemma~\ref{lem:D} is needed and was not drawn there.
The function $\mathcal N_{h^*}(\phi_B)$ takes nonnegative integer values, so $\mathcal N_{h^*}^2 \ge \mathcal N_{h^*}$ pointwise, and summing with $\sum_{\phi_B}\mathcal N_{h^*}(\phi_B) = t_{M,h^*}$ turns Eq.~\eqref{eq:parseval} into
\begin{equation}\label{eq:DtM}
  \sum_{\phi'_B\in\{0,1\}^{Q_B}}E_{h^*}(\phi')^2 \;\ge\; 2^{Q_B}\,t_{M,h^*} .
\end{equation}

\begin{theorem}\label{thm:I}
Assume Hypothesis~(U).
Fix $\varepsilon>0$ and call the superposition well-behaved for $h^*$ when $|E_{h^*}(\phi')| \ge \varepsilon\sqrt{t_{M,h^*}}$.
The probability that it is well-behaved for neither value of $h^*$ is then at most $\varepsilon^2$.
Taking $\varepsilon = 2^{-n/2}$ gives $2^{-n}$, which is the conclusion of the first half of Lemma~3 of \Sim.
\end{theorem}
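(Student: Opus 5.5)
The plan follows the template of Theorem~\ref{thm:E}: work on the record distribution over $\phi'_B\in\{0,1\}^{Q_B}$ with the rest of the record fixed, as Hypothesis~(U) allows. Each $\phi'_B$ carries weight proportional to $E_0(\phi')^2+E_1(\phi')^2$, and the constant of proportionality cancels from every ratio. Write $W_{\mathrm{tot}}=\sum_{\phi'_B}\bigl(E_0^2+E_1^2\bigr)$ for the normaliser. Call $\phi'_B$ \emph{bad} when it is well-behaved for neither branch, that is, $|E_{h^*}(\phi')|<\varepsilon\sqrt{t_{M,h^*}}$ for both $h^*=0$ and $h^*=1$. The quantity to bound is
\begin{equation}
  \Pr[\text{bad}] = \frac{\sum_{\phi'_B\ \text{bad}}\bigl(E_0(\phi')^2+E_1(\phi')^2\bigr)}{W_{\mathrm{tot}}} .
\end{equation}

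For the numerator, the counts $t_{M,h^*}$ do not depend on $\phi'_B$ under Hypothesis~(U): they are set sizes determined by the remaining record and by $h^*$ alone. On a bad string, each squared amplitude is therefore below a fixed number, $E_{h^*}^2<\varepsilon^2 t_{M,h^*}$. Summing over bad strings and bounding their number by $2^{Q_B}$ gives
\begin{equation}
  \sum_{\text{bad}}\bigl(E_0^2+E_1^2\bigr) \;\le\; 2^{Q_B}\,\varepsilon^2\,\bigl(t_{M,0}+t_{M,1}\bigr) .
\end{equation}
For the denominator, Eq.~\eqref{eq:DtM} applied to each branch and added over $h^*$ gives $W_{\mathrm{tot}}\ge 2^{Q_B}(t_{M,0}+t_{M,1})$. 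Dividing the two bounds leaves $\Pr[\text{bad}]\le\varepsilon^2$, and setting $\varepsilon=2^{-n/2}$ yields $2^{-n}$.

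The main point of care is the claim that $t_{M,h^*}$ is independent of $\phi'_B$. This rests on the definition of $T_M$: the sign $(-1)^{\phi_B\cdot\phi'_B}$ sorts states into $T^\pm_M$ but does not change membership in $T_M$, and membership depends only on $(Y,z',S,W',h')$ and $h^*$. That holds only when the partition is held fixed, so I would state explicitly that this is where Hypothesis~(U) enters, beyond its role in Eq.~\eqref{eq:DtM}. Two minor degenerate cases remain. If $t_{M,0}+t_{M,1}=0$, the record has zero weight and contributes nothing. If only one branch has $t_{M,h^*}>0$, the bound still holds, because both the numerator and the denominator estimates run over the same sum.
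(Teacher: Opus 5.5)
Your proof is correct and follows the paper's argument step for step: you fix the rest of the record, bound the weighted numerator over bad strings by $2^{Q_B}\varepsilon^2(t_{M,0}+t_{M,1})$ using that there are at most $2^{Q_B}$ of them, and bound the denominator below by the same quantity through Eq.~\eqref{eq:DtM} applied to each branch. Your explicit remark that $t_{M,h^*}$ does not vary with $\phi'_B$ once the partition is held fixed states outright a point that the paper's proof uses only implicitly.
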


\begin{proof}
Fix everything in the record except $\phi'$, i.e., $Y$, $z'$, $W'$, $S$, $h'$ and the $A/B$ partition, and let $\phi'_B$ vary.
The weight of a record is proportional to $E_0(\phi')^2+E_1(\phi')^2$, so the quantity to bound is
\begin{align}
\begin{aligned}
  &\;\Pr\bigl[\text{neither } h^* \text{ well-behaved}\bigr]\\
   =&\; \frac{\sum_{\phi'_B\ \text{failing both}}
       \bigl(E_0(\phi')^2+E_1(\phi')^2\bigr)}
      {\sum_{\phi'_B\in\{0,1\}^{Q_B}}\bigl(E_0(\phi')^2+E_1(\phi')^2\bigr)} .
\end{aligned}
\end{align}
A record failing both satisfies $E_0(\phi')^2 < \varepsilon^2t_{M,0}$ and $E_1(\phi')^2 < \varepsilon^2t_{M,1}$, so $E_0(\phi')^2+E_1(\phi')^2 < \varepsilon^2(t_{M,0}+t_{M,1})$.
There are at most $2^{Q_B}$ such strings, so the numerator is below $2^{Q_B}\varepsilon^2(t_{M,0}+t_{M,1})$.
Applying Eq.~\eqref{eq:DtM} to each branch, the denominator is
\begin{equation}
  2^{Q_B}\sum_{\phi_B}\bigl(\mathcal N_0(\phi_B)^2+\mathcal N_1(\phi_B)^2\bigr)
   \;\ge\; 2^{Q_B}\bigl(t_{M,0}+t_{M,1}\bigr) .
\end{equation}
Dividing the numerator by the denominator, both $2^{Q_B}$ and $t_{M,0}+t_{M,1}$ cancel and $\varepsilon^2$ remains.
\end{proof}

\emph{Note added}. Since \Sim appeared, its lemmas have been the subject of active public scrutiny.
We became aware of parts of that discussion while this work was in preparation.
Where our findings overlap with analyses circulated elsewhere we claim no priority.
Our results were obtained independently and are stated so that each can be checked against \Sim directly.

\begin{acknowledgments}
This work is supported by the National Natural Science Foundation of China (NSFC) (Grant No.~12475022 and No.~125B2100) and the Quantum Science and Technology - National Science and Technology Major Project (Grant No.~2021ZD0302100).
\end{acknowledgments}


\end{document}